\theoremstyle{plain}
\newcommand{\Etight}{\ensuremath{E_{y}}}
\tikzset{
  every node/.style={
    outer sep=1.2pt,
  },
  edge/.style={
    bend angle=25,
    thick,
    ->,
    -stealth=latex,
  },
  del/.style={
    bend angle=20,
    thick,
    ->,
    -stealth=latex
  },
  del1/.style={
    del
  },
  del2/.style={
    del,
    dash pattern=on 5pt off 1.5pt
  },
  del3/.style={
    del,
    densely dotted
  },
  edge_label/.style={
    pos=0.48,
    black!87,
    fill = white,
    inner sep = 2pt,
    rounded corners=4pt,
    font = \bfseries,
  },
  state/.style={
    circle,
    draw = black,
    thick,
    fill = white,
  },
  abs/.style={
    state,
    minimum size = 16pt,
    rectangle
  },
  non_abs/.style={
    state,
  },
  state_small/.style={
    circle,
    draw = black,
    thick,
    fill = white,
    minimum size = 4mm,
    inner sep = 1pt
  },
  abs_small/.style={
    state_small,
    minimum size = 12pt,
    rectangle
  },
  non_abs_small/.style={
    state_small,
  },
  dual_set/.style={
    draw = #1,
    text = #1,
    fill = #1!15,
    thick,
    inner sep = 5pt,
    rectangle,
    rounded corners = 1.5mm
  },
  dual_set_black/.style={
    draw = black,
    fill = black!5,
    thick,
    inner sep = 5pt,
    rectangle,
    rounded corners = 1.5mm
  },
  abs_set/.style={
    draw = black,
    fill = black!7,
    thick,
    inner sep = 5pt,
    rectangle,
  },
  dual_set_label/.style={
  },
  weight/.style={
    font = \bfseries,
  }
}
\newtheorem{theorem}{Theorem}
\newtheorem{lemma}[theorem]{Lemma}
\newtheorem{proposition}{Proposition}
\newcommand{\appsymb}{$\bigstar$}
\declaretheoremstyle[%
  spaceabove=6pt,%
  spacebelow=6pt,%
  headfont=\normalfont\itshape,%
  postheadspace=1em,%
  qed=\qedsymbol%
]{mystyle} 
\declaretheorem[name={Proof},style=mystyle,unnumbered,
]{proof_directly}
\newcommand{\F}{\mathcal{F}}
\newcommand{\eps}{\varepsilon}
\newcommand{\prob}{\mathbb{P}}
\newcommand{\W}{\ensuremath{\mathcal{W}}}
\newcommand{\Wh}{\ensuremath{\hat{\mathcal{W}}}}
\newcommand{\w}{\ensuremath{t}}
\newcommand{\mixborda}{\textsc{Mixed Borda Branching}\xspace}
\newcommand{\randwalk}{\textsc{Random Walk Rule}\xspace}
\newcommand{\T}{\intercal}
\renewcommand{\vec}[1]{\mathbf{#1}}
\title{Anonymous and Copy-Robust Delegations \\for Liquid Democracy\thanks{Part of this research was carried out while both authors were affiliated with TU Berlin and Universidad de Chile, Markus Utke was affiliated with University of Amsterdam and Ulrike Schmidt-Kraepelin was affiliated with Simons Laufer Mathematical Sciences Institute (SLMath).}}
\author{%
  Markus Utke\\ 
  \small TU Eindhoven, The Netherlands\\
  \small \texttt{m.utke@tue.nl}
   \and
  Ulrike Schmidt-Kraepelin\\ 
  \small TU Eindhoven, The Netherlands\\
  \small \texttt{u.schmidt.kraepelin@tue.nl} \\
}
\begin{document}

\maketitle

\begin{abstract}
Liquid democracy with ranked delegations is a novel voting scheme that unites the practicability of representative democracy with the idealistic appeal of direct democracy: Every voter decides between \emph{casting} their vote on a question at hand or \emph{delegating} their voting weight to some other, trusted agent. Delegations are transitive, and since voters may end up in a delegation cycle, 
they are encouraged to indicate not only a single delegate, but a set of potential delegates
and a ranking among them. Based on the delegation preferences of all voters, a \emph{delegation rule} selects one representative per voter. Previous work 
has revealed a trade-off between two properties of delegation rules called \emph{anonymity} and \emph{copy-robustness}.

To overcome this issue we study two \emph{fractional} delegation rules: \mixborda, which generalizes a rule satisfying copy-robustness, and the \randwalk, which satisfies anonymity. 
Using the \emph{Markov chain tree theorem}, we show that the two rules are in fact equivalent, and simultaneously satisfy generalized versions of the two properties. 
Combining the same theorem with \emph{Fulkerson's algorithm}, we develop  a polynomial-time algorithm for computing the outcome of the studied delegation rule. This algorithm is of independent interest, having applications in semi-supervised learning and graph theory. 
\end{abstract}

\section{Introduction} \label{sec:introduction}
Today, democratic decision-making in legislative bodies, parties, and non-profit organizations is often done via one of two extremes: In \emph{representative democracy}, the constituents elect representatives who are responsible for deciding upon all upcoming issues for a period of several years. In \emph{direct democracy}, the voters may vote upon every issue themselves. While the latter is distinguished by its idealistic character, it may suffer from low voter turnout as voters do not feel sufficiently informed. Liquid democracy aims to provide the best of both worlds by letting voters decide whether they want to \emph{cast} their opinion on an issue at hand, or prefer to \emph{delegate} their voting weight to some other, trusted voter. Delegations are transitive, i.e., if voter $v_1$ delegates to voter $v_2$, and $v_2$ in turn delegates to voter $v_3$, who casts its vote, then $v_3$ receives the voting weight of both $v_1$ and $v_2$. 
Liquid democracy has been implemented, for example, by political parties \citep{KKH+15a} and Google \citep{HaLo15a}. From a theoretic viewpoint, liquid democracy has been studied intensively by the social choice community in the last decade \citep{Paul20a}.

Earlier works on liquid democracy \citep{ChGr17a,Bril18a} have pointed towards the issue of \emph{delegation cycles}, e.g., the situation that occurs when voter $v_3$ in the above example decides to delegate to voter $v_1$ instead of casting its vote. If this happens, none of the three voters reaches a casting voter via a chain of trusted delegations, and therefore their voting weight would be lost. In order to reduce the risk of the appearance of such so-called \emph{isolated} voters, several scholars suggested to allow voters to indicate \emph{back-up} delegations \citep{Bril18a,GKMP21a,KKM+21a} that may be used in case there is no delegation chain using only top-choice delegations. In \emph{liquid democracy with ranked delegations} \citep{BDG+22a,CGN22a,KKM+21a}, voters are assumed to indicate a set of trusted delegates together with a ranking (preference order) among them. In fact, \cite{BDG+22a} showed empirically that in many random graph models, one to two back-up delegations per voter suffice in order to avoid the existence of isolated voters almost entirely. 

Allowing the voters to indicate multiple possible delegations calls for a principled way to decide between multiple possible delegation chains. For example, consider \Cref{fig:simple}: Should voter $v_1$'s weight be assigned to casting voter $s_1$, via $v_1$'s second-ranked delegation, or should it rather be assigned to voter $s_2$ by following $v_1$'s first-ranked delegation to voter $v_2$, and then following $v_2$'s second-ranked delegation? \cite{BDG+22a} introduced the concept of \emph{delegation rules}, which take as input a \emph{delegation graph} (i.e., a digraph with a rank function on the edges) and output an assignment of each (non-isolated) delegating voter to a casting voter.
In order to navigate within the space of delegation rules, they apply the axiomatic method \citep{Thom01a}, as commonly used in social choice theory. In particular, the authors argue that the following three axioms are desirable: 
\begin{enumerate}[label=(\roman*)]
    \item \emph{Confluence}: A delegation rule selects one path from every delegating voter to a casting voter and these paths are ``consistent'' with one another. That is, when the path of voter $v_1$ reaches some other delegating voter $v_2$, the remaining subpath of $v_1$ coincides with the path of $v_2$. This was argued to increase accountability of delegates \citep{BDG+22a,GKMP21a}. 
    \item \emph{Anonymity}: A delegation rule should make decisions solely on the structure of the graph, not on the identities of the voters, i.e. it should be invariant under renaming of the voters.
    \item \emph{Copy-robustness}: If a delegating voter $v_1$ decides to cast her vote herself instead of delegating, this should not change the sum of the voting weight assigned to 
    $v_1$'s representative and herself. This property was emphasized by practitioners \citep{BeSw15a} to avoid manipulations in the system by delegating voters acting as casting voters but actually \emph{copying} the vote of their former representative.
\end{enumerate}

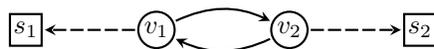
\begin{figure}[h]
\centering
\begin{tikzpicture}[node distance=1.2cm]
      \node (v1) [non_abs_small] at (0, 0) {$v_1$};
      \node (v2) [non_abs_small, right  = of v1]  {$v_2$};
      \node (s1) [abs_small,  left = of v1] {$s_1$};
      \node (s2) [abs_small,  right = of v2] {$s_2$};
      \draw[->] (v1) edge[del1, bend left = 25] (v2);
      \draw[->] (v2) edge[del1, bend left = 25] (v1);
      \draw[->] (v1) edge[del2] (s1);
      \draw[->] (v2) edge[del2] (s2);
\end{tikzpicture}
\caption{Example of a delegation graph. Delegating voters ($v_1$ and $v_2$) are indicated by circles and casting voters ($s_1$ and $s_2$) by squares. Solid edges represent the first-ranked delegations and dashed edges second-ranked delegations.}
\label{fig:simple}
\end{figure}

For any pair of axioms (i) to (iii), \cite{BDG+22a} provide a delegation rule that satisfies both of them. In contrast, we prove in \Cref{sec:axioms} that there exists no delegation rule that satisfies all three properties simultaneously, thereby strengthening an impossibility result by \cite{BDG+22a}.\footnote{\cite{BDG+22a} show that there exists no delegation rule belonging to the subclass of \emph{sequence rules} that is both confluent and copy-robust. Any sequence rule is in particular anonymous.}

\paragraph{Our Contribution} 
We show that the above impossibility is due to the restriction that delegation rules may not distribute the voting weight of a delegating voter to more than one casting voter. 
\begin{itemize}
    \item We generalize the definition of delegation rules to \emph{fractional delegation rules} (\Cref{sec:model}) and provide generalizations of all three axioms above (\Cref{sec:axioms}).
    \item We introduce a natural variant of the \textsc{Borda Branching} rule \citep{BDG+22a}, which we call \mixborda. We show that this rule is equivalent to the \randwalk, a fractional delegation rule that has been suggested by \cite{Bril18a}. 
    \item In our main result, we build upon \emph{Fulkerson's algorithm} \citep{fulkerson1974packing} and the \emph{Markov chain tree theorem} \citep{LeRi86a} and show the existence of a polynomial-time algorithm for \mixborda. This algorithm is of independent interest, as it computes the probability of two nodes being connected, when sampling a min-cost branching in a digraph uniformly at random. 
    This problem features in semi-supervised learning, under the name \emph{directed power watershed} \citep{FDH21a}, a directed variant of the \emph{power watershed} \citep{CGNT10a}. To the best of our knowledge, we provide the first efficient algorithm.
    \item In \Cref{sec:axioms}, we show that the \randwalk (and thus \mixborda) satisfies the generalizations of all three axioms. We also formalize the impossibility for non-fractional delegation rules. Beyond that, we show that the \randwalk satisfies a generalization of a further axiom (\emph{guru participation}) which has been studied in the literature \citep{KoRi20a,CGN22a, BDG+22a} (\Cref{app:axioms}).
\end{itemize}

The proofs (or their completions) for results marked by (\appsymb) can be found in the appendix.

\vspace{-0.1cm}
\paragraph{Related Work} \emph{Liquid democracy.} The idea to let agents rank potential delegates in liquid democracy was first presented by the developers of the liquid democracy platform \emph{Liquid Feedback} \citep{BeSw15a}, who presented seven properties that cannot be satisfied simultaneously. Some of these properties, such as \emph{copy-robustness} and \emph{guru participation}, have been picked up in the social choice literature \citep{KoRi20a, BDG+22a, CGN22a}. The connection of confluent delegation rules to branchings in a digraph was first emphasized by \cite{KKM+21a} and later built upon in \citep{BDG+22a, natsui2022finding}. We base our model on \citep{BDG+22a}, as their model captures all rules and axioms from the literature. 
Fractional delegations were studied by \cite{Degr14a} and \cite{Bers22a}, however, here agents indicate a desired distribution among their delegates instead of a ranking. While the two approaches are orthogonal, we argue in \Cref{sec:equivalence} that they could be easily combined (and our algorithm could be adjusted). 

\vspace{-0.05cm}
\emph{Branchings and matrix tree theorems.} Our algorithm for computing \mixborda is based on an algorithm for computing min-cost branchings in directed trees. This can be done, e.g., via Edmond's algorithm \citep{Edmo67a} or Fulkerson's algorithm \citep{fulkerson1974packing}. The latter comes with a characterization of min-cost branchings in terms of dual certificates, which we utilize in \Cref{alg:dual}. We refer to \cite{kamiyama2014arborescence} for a comprehensive overview on the literature of min-cost branchings. 
Our algorithm makes use of (a directed version of) the \emph{matrix tree theorem} \citep{Tutt48a}, which allows to count directed trees in a digraph. An extension of this theorem is the \emph{Markov chain tree theorem} \citep{LeRi86a}, which we use for the construction of our algorithm as well as for proving the equivalence of \mixborda and the \randwalk. 
A comprehensive overview of the literature is given by \citet{PiTa18a}.

\vspace{-0.05cm}

\emph{Semi-supervised learning.} 
There is a connection of our setting to graph-based semi-supervised learning. In particular, \Cref{alg:dual} is related to the \emph{power watershed algorithm} \citep{CGNT10a} and the \emph{probabilistic watershed algorithm} \citep{FDH19a,FDH21a}. We elaborate on this connection in \Cref{sec:learning}.

\vspace{-0.1cm}

\section{Preliminaries} \label{sec:preliminaries}

The main mathematical concepts used in this paper are directed graphs (also called digraphs), branchings, in-trees, and Markov chains, all of which we briefly introduce below. \\
We assume that a digraph $G$ has no parallel edges, and denote by $V(G)$ the set of nodes and by $E(G)$ the set of edges of $G$. 
We use $\delta_G^{+}(v)$ to indicate the set of outgoing edges of node $v \in V(G)$, i.e., $\delta_G^+(v) = \{(v,u) \in E(G)\}$. For a set of nodes $U \subseteq V(G)$, we define the outgoing cut of $U$ by $\delta_G^{+}(U) = \{(u,v) \in E(G) \mid u \in U, v \in V(G)\setminus U\}$. A walk $W$ is a node sequence $(W_1,\dots,W_{|W|})$, such that $(W_i,W_{i+1}) \in E(G)$ for $i\in \{1, \dots, |W|-1\}$. We omit $G$ if it is clear from the context.\\

\textbf{Branchings and in-trees.} Given a digraph $G$, we say that $B \subseteq E$ is a branching (or in-forest) in $G$, if $B$ is acyclic, and the out-degree of any node $v \in V(G)$ in $B$ is at most one, i.e., $|B \cap \delta^{+}(v)| \leq 1$. 
Throughout the paper, we use the term \emph{branching} to refer to \emph{maximum-cardinality branchings}, i.e., branchings $B$ maximizing $|B|$ among all branchings in $G$. 
For a given digraph $G$ we define $\mathcal{B}(G)$ as the set of all (max-cardinality) branchings and $\mathcal{B}_{v,s}(G)$ as the set of all (max-cardinality) branchings in which node $v \in V(G)$ has a path to the node $s \in V(G)$. For any branching $B$ in any digraph $G$ it holds that $|B| \leq |V(G)|-1$. If in fact $|B| = |V(G)|-1$, then $B$ is also called an in-tree. For every in-tree $T \subseteq E$, there exists exactly one node $v \in V(G)$ without outgoing edge. In this case, we also say that $v$ is the sink of $T$ and call $T$ a $v$-tree. For $v \in V(G)$, we let $\mathcal{T}_v(G)$ be the set of $v$-trees in $G$. 

\textbf{Matrix tree theorem.} For our main result we need to count the number of in-trees in a weighted digraph, which can be done with help of the matrix tree theorem. For a digraph $G$ with weight function $w: E \rightarrow \mathbb{N}$, we define the weight of a subgraph $T \subseteq E$ as $w(T) = \prod_{e \in T} w(e)$ and the weight of a collection of subgraphs $\mathcal{T}$ as $w(\mathcal{T}) = \sum_{T \in \mathcal{T}}w(T)$. Then, we define the \emph{Laplacian} matrix of $(G,w)$ as $L = D - A$, where $D$ is a diagonal matrix containing the weighted out-degree of any node $v$ in the corresponding entry $D_{v,v}$ and $A$ is the weighted adjacency matrix, given as $A_{u,v} = w((u,v))$ for any edge $(u,v) \in E$ and zero everywhere else.
Moreover, we denote by $L^{(v)}$ the matrix resulting from $L$ when deleting the row and column corresponding to $v$.

\begin{lemma}[Matrix tree theorem \citep{Tutt48a,DeL20a}\protect\footnote{\cite{Tutt48a} proves the theorem for digraphs, the weighted version can be found in \citep{DeL20a}.}
]\label{lem:matrix-tree}
    Let $(G,w)$ be a weighted digraph and let $L$ be its Laplacian matrix. Then, $$\text{det}(L^{(v)}) = \sum_{T \in \mathcal{T}_v} \prod_{e \in T} w(e) = w(\mathcal{T}_v).$$
\end{lemma}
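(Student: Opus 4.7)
\begin{proof_sketch}
The plan is to expand $\det(L^{(v)})$ by multi-linearity of the determinant in the rows and then show that only edge selections forming $v$-trees contribute. First, for each edge $e = (u, u') \in E$, let $r_e$ be the row vector indexed by $V(G) \setminus \{v\}$ with a $+1$ in column $u$ (if $u \neq v$) and a $-1$ in column $u'$ (if $u' \neq v$). By the definition of the Laplacian, row $u$ of $L^{(v)}$ decomposes as $\sum_{e \in \delta^+(u)} w(e)\, r_e$. (If some $u \neq v$ has no outgoing edge at all, its row is zero and both sides of the claim vanish, so we may assume every non-$v$ node has at least one outgoing edge.)

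Applying multi-linearity row by row then yields
\[
\det(L^{(v)}) \;=\; \sum_{f} \Big(\prod_{u \neq v} w(f(u))\Big) \det(M_f),
\]
where $f$ ranges over functions assigning each $u \neq v$ an edge $f(u) \in \delta^+(u)$, and $M_f$ is the matrix whose $u$-th row is $r_{f(u)}$. Every such $f$ defines a subgraph $F_f := \{f(u) : u \neq v\}$ in which each non-$v$ node has out-degree exactly one and $v$ has out-degree zero.

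The heart of the argument is a case split on $F_f$. If $F_f$ contains a directed cycle $(u_1, u_2, \dots, u_k, u_1)$ (none of the $u_i$ can equal $v$, since $v$ has no outgoing edge in $F_f$), summing the corresponding rows of $M_f$ yields the zero vector: at each column $u_j$ we collect $+1$ from row $u_j$ and $-1$ from row $u_{j-1}$, with no other contributions. Hence $\det(M_f) = 0$. Otherwise $F_f$ is acyclic with $|V|-1$ edges, so following outgoing edges from any node must terminate at $v$, and $F_f \in \mathcal{T}_v$. I would then simultaneously permute the rows and columns of $M_f$ by a topological order of $F_f$ (children before parents), which preserves the determinant and renders $M_f$ upper-triangular with $+1$'s on the diagonal, so $\det(M_f) = 1$.

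Collecting the surviving terms yields $\det(L^{(v)}) = \sum_{T \in \mathcal{T}_v} \prod_{e \in T} w(e) = w(\mathcal{T}_v)$, as desired. The step I expect to require the most care is the tree case: rows and columns must be permuted \emph{together} so that no stray sign appears, and self-loops (for which $r_e$ is identically zero and hence contribute nothing) should be handled separately so as not to interfere with the upper-triangularization. With those subtleties in place the argument is direct.
\end{proof_sketch}
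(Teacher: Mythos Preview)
The paper does not supply its own proof of this lemma; it is quoted as a classical result with references to \cite{Tutt48a} and \cite{DeL20a}. Your sketch is the standard multi-linearity/functional-digraph argument and is correct as written, including the two points you flag: simultaneously permuting rows and columns by a topological order of the in-tree (so the sign is $\det(P)\det(P^{\T})=1$) indeed makes $M_f$ upper-triangular with unit diagonal, and self-loops yield $r_e=0$ and drop out harmlessly. One tiny wording tweak: in the cycle case you might also note that if some $f(u)$ points to $v$ then the $-1$ simply disappears from $r_{f(u)}$, which is why the cycle cannot pass through $v$ and the row-sum argument goes through cleanly; you already say this implicitly via ``$v$ has out-degree zero in $F_f$'', so this is purely cosmetic.
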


If we interpret the weight of an edge as its \emph{multiplicity} in a multigraph, then $\text{det}(L^{(v)})$ equals the total number of distinct $v$-trees.

\textbf{Markov Chains.}
A Markov chain is a tuple $(G,P)$, where $G$ is a digraph and the matrix $P \in [0,1]^{|V| \times |V|}$ encodes the transition probabilities. That is, the entry $P_{u,v}$ indicates the probability with which the Markov chain moves from state $u$ to state $v$ in one timestep. For a given edge $e=(u,v)\in E$, we also write $P_e$ to refer to $P_{u,v}$. For all $v \in V$ it holds that $\sum_{e \in \delta^+(v)} P_{e} = 1$. Moreover, if $(u,v) \notin E$, we assume $P_{u,v} = 0$. We define the matrix $Q \in [0,1]^{|V| \times |V|}$ as $Q = \lim_{\tau \rightarrow \infty} \frac{1}{\tau} \sum_{i = 0}^\tau P^\tau$.
If $(G,P)$ is an absorbing Markov chain, $Q_{u,v}$ corresponds to the probability for a random walk starting in $u$ to end in absorbing state $v$. In contrast, if 
$G$ is strongly connected and $P$ is positive for all edges in $G$,
then $Q_{u,v}$ corresponds to the relative number of times $v$ is visited in an infinite random walk independent of the starting state \citep{GrSn97b}.

In the axiomatic analysis we will need the following lemma, which we proof in the appendix.
\begin{restatable}[\appsymb]{lemma}{absorbingSelfLoops} \label{lem:absorbing_self_loops}
  Adding a self-loop to a non-absorbing state $v$ with probability $p$ and scaling all other transition probabilities from that state by $1-p$ does not change the absorbing probabilities of an absorbing Markov-chain $(G,P)$.
\end{restatable}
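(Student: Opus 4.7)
The plan is to express the modification as a rank-one update of the transition matrix and then exploit the block structure that absorbing Markov chains inherit. Let $v$ be the non-absorbing (hence transient) state receiving a self-loop of probability $p\in[0,1)$, and let $e_v$ denote the standard basis vector at $v$. Only row $v$ of $P$ changes, and a direct check shows that the new transition matrix satisfies
$$P' \;=\; P \;+\; p\, e_v e_v^{\T}(I-P),$$
since row $v$ of the right-hand side equals $P_{v,\cdot} + p(e_v^{\T} - P_{v,\cdot}) = p\,e_v^{\T} + (1-p)P_{v,\cdot}$, as required, while all other rows are untouched.

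I would then partition the state set into transient and absorbing parts and write
$$P = \begin{pmatrix} T & R \\ 0 & I \end{pmatrix}, \qquad P' = \begin{pmatrix} T' & R' \\ 0 & I \end{pmatrix}.$$
Splitting the rank-one update along the two column groups, and viewing $e_v$ now as the basis vector on the transient index set, yields the two factorizations
$$I - T' \;=\; (I - p\,e_v e_v^{\T})(I - T), \qquad R' \;=\; (I - p\,e_v e_v^{\T})\,R.$$
The matrix $I - p\,e_v e_v^{\T}$ is diagonal with all entries $1$ except for $1-p$ at position $v$, so it is invertible for $p<1$. Multiplying the two identities gives
$$(I - T')^{-1}R' \;=\; (I - T)^{-1}(I - p\,e_v e_v^{\T})^{-1}(I - p\,e_v e_v^{\T})\,R \;=\; (I - T)^{-1}R.$$
Since the absorbing-probability submatrix of $Q$ (from transient states to absorbing states) equals $(I-T)^{-1}R$ for any absorbing Markov chain \citep{GrSn97b}, the absorbing probabilities of $(G,P)$ and $(G',P')$ coincide.

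The main (minor) obstacle is the bookkeeping that packages the row-$v$ modification as the stated rank-one update and confirms the two factorizations split cleanly across the transient and absorbing column blocks. A purely probabilistic alternative would couple the two chains by observing that in $(G',P')$ each visit to $v$ is inflated by a $\mathrm{Geom}(1-p)$ waiting time, while the conditional law of the state reached upon \emph{leaving} $v$ is still $(P_{v,w})_{w}$; thus the embedded chain that skips self-loops at $v$ has the same law as the original, yielding identical absorption probabilities. I would retain this coupling as a sanity check and present the matrix identity as the primary proof, since it gives a clean closed form and makes the invertibility hypothesis $p<1$ explicit.
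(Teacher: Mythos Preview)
Your proof is correct and follows essentially the same approach as the paper: both arguments use the block form of an absorbing chain and verify that the absorbing-probability matrix $(I-T)^{-1}R$ is unchanged by directly computing how the row-$v$ modification affects $T$ and $R$. The paper checks this row-by-row via the fixed-point equation $Z=T'Z+R'$, whereas you package the same computation as the factorizations $I-T'=(I-pe_ve_v^{\T})(I-T)$ and $R'=(I-pe_ve_v^{\T})R$ and cancel; this is a cosmetic difference, and your probabilistic coupling remark is a nice complementary sanity check.
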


\vspace{-0.1cm}

\section{Liquid Democracy with Fractional Delegations}\label{sec:model}

A \emph{delegation graph} $G=(N \cup S,E)$ is a digraph with a \emph{cost function}\footnote{We remark that in this paper cost functions and weight functions play different roles.} $c: E \rightarrow \mathbb{N}$ (called \emph{rank function} before), representing the preferences of the voters over their potential delegates (lower numbers are preferred). Nodes correspond to voters and an edge $(u,v) \in E$ indicates that $u$ accepts $v$ as a delegate. By convention, the set of nodes $S$ corresponds exactly to the sinks of the digraph, i.e., the set of nodes without outgoing edges. Thus, $S$ captures the casting voters, and $N$ the delegating voters. We assume for all $v \in N$ that they reach some element in $S$.\footnote{If this is not the case, such a voter is \emph{isolated}, i.e., there is no chain of trusted delegations to a casting voter, thus we cannot meaningfully assign its voting weight. We remove isolated nodes in a preprocessing step.} A \emph{delegation rule} maps any delegation graph to a \emph{fractional assignment}, i.e., a matrix $A \in [0,1]^{N \times S}$, where, for every $v \in N$, $s \in S$, the entry $A_{v,s}$ indicates the fraction of delegating voter $v$'s weight that is allocated to casting voter $s \in S$.\footnote{While this definition allows $A_{v,s}>0$ for any $v \in N, s\in S$, for a sensible delegation rule, this should only be the case if there exists a path from $v$ to $s$. \emph{Confluence} (defined in \Cref{sec:axioms}) implies this restriction.}
For any $v \in N$ we refer to any casting voter $s \in S$ with $A_{v,s} > 0$ as a \textit{representative} of $v$. For assignment $A$, the \emph{voting weight} of a casting voter $s \in S$ is defined as $\pi_s(A) = 1 + \sum_{v \in N} A_{v,s}$. A \emph{non-fractional} delegation rule is a special case of a delegation rule, that always returns assignments $A \in \{0,1\}^{N \times S}$.

\textbf{\mixborda}.
The output of any non-fractional, confluent delegation rule can be represented as a branching: Any branching in the delegation graph consists of $|N|$ edges, and every delegating voter has a unique path to some casting voter. \cite{BDG+22a} suggested to select min-cost branchings, i.e., those minimizing $\sum_{e \in B} c(e)$. The authors call these objects Borda branchings and show that selecting them yields a copy-robust rule. As this rule is inherently non-anonymous, we suggest to mix uniformly over all Borda branchings, hoping to gain anonymity without losing the other properties\footnote{Confluence and copy-robustness are not directly inherited from the non-fractional counterpart of the rule.}. Formally, for a given delegation graph $(G,c)$, let $\mathcal{B}^*(G)$ be the set of all Borda branchings and let $\mathcal{B}_{v,s}^*(G)$ be the set of all Borda branchings in which delegating voter $v \in N$ is connected to casting voter $s \in S$. \mixborda returns the assignment $A$ defined as 
$$A_{v,s} = \frac{|\mathcal{B}^*_{v,s}(G)|}{|\mathcal{B}^*(G)|} \quad \text{for all } v \in V, s \in S.$$

\textbf{\randwalk}.
The second delegation rule was suggested in \cite{Bril18a} and attributed to Vincent Conitzer. 
For any given delegation graph $(G,c)$ and fixed $\eps \in (0,1]$, we define a Markov chain on $G$, where the transition probability matrix $P^{(\eps)} \in [0,1]^{V(G) \times V(G)}$ is defined as
\begin{equation}\label{eq:rw_transition_probabilities}
    P^{(\eps)}_{u,v} = \frac{\eps^{c(u,v)}}{\bar{\eps}_u} \quad \text{ for any $(u,v) \in E$},
\end{equation}
where $\bar{\eps}_u = \sum_{(u,v) \in \delta^+(u)} \eps^{c(u,v)}$ is the natural normalization factor. 
The Markov chain $(G, P^{(\eps)})$ is absorbing for every $\eps \in (0,1]$ where the absorbing states are exactly $S$. The \randwalk returns the fractional assignment $A$ defined as the limit of the absorbing probabilities, i.e., 
\[
    A_{v,s} = \lim_{\eps \rightarrow 0} \Big(\lim_{\tau \rightarrow \infty} \frac{1}{\tau} \sum_{i = 0}^\tau (P^{(\eps)})^\tau \Big)_{v,s} \quad \text{for all } v \in N, s \in S. 
\]

\section{Connection to Semi-Supervised Learning} \label{sec:learning}

In graph-based semi-supervised learning, the input is a directed or undirected graph, where nodes correspond to data points and edges correspond to relationships between these. A subset of the nodes is \emph{labeled} and their labels are used to classify unlabeled data. Many algorithms compute a fractional assignment of unlabeled data points towards labeled data points, which is then used to determine the predictions for unlabeled data. In the directed case, there exists a one-to-one correspondence to our model: Delegating voters correspond to unlabeled data and casting voters correspond to labeled data.

\textbf{Power Watershed.} In the undirected case, the \emph{power watershed} \citep{CGNT10a} can be interpreted as an undirected analog of \mixborda: For any pair of unlabeled data point $x$ and labeled data point $y$, the algorithm computes the fraction of min-cost undirected maximum forests that connect $x$ to $y$.\footnote{This analogy holds for the variant of the power watershed when a parameter $q$ is $2$ \citep{CGNT10a}.} The authors provide a polynomial-time algorithm for computing its outcome. On a high level, our algorithm (\Cref{sec:borda}) is reminiscent of theirs, i.e., both algorithms iteratively contract parts of the graph, leading to a hierarchy of subsets of the nodes. However, while the algorithm by  \cite{CGNT10a} only needs to carry out computations at the upper level of the hierarchy, our algorithm has to carry out computations at each level of the hierarchy. 
We believe that the increased complexity of our algorithm is inherent to our problem and do not find this surprising: Even the classic min-cost spanning tree problem can be solved by a greedy algorithm and forms a Matroid, but this structure gets lost when moving to its directed variant.

\textbf{Directed Probabilistic Watershed.} \cite{FDH21a} study a directed version of graph-based semi-supervised learning. The authors introduce the \emph{directed probabilistic watershed (DProbWS)}: Similar to our model, there exists a cost function $c$ and a weight function $w$ on the edges. In their case, these functions are linked by $w(e) = \exp(-\mu c(e))$. The paper studies a Gibbs distribution with respect to the weights, i.e., the probability of sampling a branching is \emph{proportional} to its weight. The parameter $\mu$ controls the entropy of this function, i.e., for $\mu=0$, any branching is sampled with equal probability, while larger $\mu$ places more probability on low cost branchings. The authors show: (i) For fixed $\mu$, the fractional allocation induced by the Gibbs distribution can be computed by calculating the absorbing probabilities of a Markov chain. (ii) For the limit case $\mu \rightarrow \infty$, the defined distribution equals the uniform distribution over all min-cost branchings, i.e., the distribution that we study in this paper. In this limit case, the authors refer to the corresponding solution as the \emph{directed power watershed}. Hence, the authors have shown the equivalence of the directed power watershed and the limit of a parameterized Markov chain. Since this Markov chain only slightly differs from ours, this result is very close to our \Cref{thm:equivalence}. As its proof is very short and might be of interest for the reader, we still present it in \Cref{sec:equivalence}. 

Importantly, \cite{FDH21a} do not show how to \emph{compute} the outcome of the directed power watershed. In particular, the algorithm from (i) makes explicit use of the weight function on the edges (which depends on $\mu$). Hence, the running time of the algorithm grows to infinity in the limit case. We fill this gap by providing the (to the best of our knowledge) first polynomial-time algorithm for the directed power watershed. Maybe surprisingly, we need a significantly more complex approach to solve the limit case: While the algorithm of \cite{FDH21a} solves one absorbing Markov chain, our algorithm derives a hierarchical structure of subsets of the nodes by the structural insights provided by \emph{Fulkerson's} algorithm, and solves several Markov chains for each of the hierarchy.

\section{Computation of \mixborda} \label{sec:borda}
Our algorithm for computing the outcome of \mixborda is an extension of an algorithm by \cite{fulkerson1974packing} for computing an arbitrary min-cost branchings in a digraph. The algorithm by Fulkerson follows a primal-dual approach and can be divided into two phases, where the first phase characterizes min-cost branchings with the help of a family of subsets of the nodes, and the second phase then constructs one arbitrary min-cost branching. Building upon the first phase, we show that, for every two nodes $v \in N$ and $s \in S$, we can \emph{count} the number of min-cost branchings that connect the two nodes by applying an extension of the matrix tree theorem \citep{Tutt48a, DeL20a} and the \emph{Markov chain tree theorem} \citep{LeRi86a}.

\bigskip

 \textbf{Fulkerson's algorithm.}\footnote{We slightly adjust the algorithm, as \cite{fulkerson1974packing} studies directed out-trees and assumes one sink only.}
The algorithm (described formally in \Cref{alg:cert}) maintains a function $y:2^{N \cup S} \rightarrow \mathbb{N}$ and a subset of the edges $E_{y} \subseteq E$. The set $E_y$ captures edges that are \emph{tight} (w.r.t. $y$), i.e., those $e \in E$ satisfying $$\sum_{X \subseteq N \cup S: e \in {\delta^+(X)}} y(X)= c(e).$$
The algorithm takes as input a delegation graph $G$ together with a cost function $c: E \rightarrow \mathbb{N}$. 

\begin{algorithm}[h]
  \begin{algorithmic}[1]
      \caption{Fulkerson's Algorithm  \citep{fulkerson1974packing,kamiyama2014arborescence}} \label{alg:cert}
      \State Set $y(X) = 0$ for any set $X \subseteq N \cup S$ except $y(\{s\}) = 1$ for any $s \in S$, $E_{y} = \emptyset$ 
      \vspace{0.1cm}
      \While{some node in $N$ cannot reach any node in $S$ in the graph $(N \cup S, E_{y})$}
      \State let $X \subseteq N$ be a strongly connected component in $(N \cup S, E_{y})$ with $\delta^{+}(X) \cap E_{y} = \emptyset$ \label{alg-line:choose_component}
      \State set $y(X)$ to minimum value such that some edge in $\delta^+(X)$ is tight, add tight edges to $E_{y}$
      \EndWhile
      \State{$\F = \{X \subseteq N \cup S \mid y(X)>0\}$}
      \vspace{0.1cm}
      \State \Return ($\F, E_{y}, y$)
  \end{algorithmic}
\end{algorithm}

Since $c(e) \geq 1$ for all $e \in E$, the output $\F$ contains all singleton sets induced by nodes in $N \cup S$, and beyond that subsets of $N$. In the following we show several structural insights that are crucial for the construction of our algorithm. While statements (ii) and (iii) have been proven in similar forms by \cite{fulkerson1974packing}, we prove all of \Cref{lem:certficates} in \Cref{app:proof-dual} for completeness. 

\medskip

\begin{restatable}[\appsymb]{lemma}{dualcharacterization}\label{lem:certficates}
Let $(G,c)$ be a delegation graph and let $(\F,E_y,y)$ be the output of \Cref{alg:cert}. Then:
\setlist{nolistsep}
\begin{enumerate}[label=(\roman*),itemsep=4pt]
    \item For every $(G,c)$, the output of the algorithm is unique, i.e., it does not depend on the choice of the strongly connected component in line 3. 
    \item $\F$ is laminar, i.e., for any $X,Y \in \F$ it holds that either $X \subseteq Y$, $Y \subseteq X$, or $X \cap Y = \emptyset$.
    \item Branching $B$ in $(G,c)$ is min-cost iff (a) $B \subseteq E_y$, and (b) $|B \cap \delta^{+}(X)| = 1$ for all $X \in \F, X \subseteq N$.
    \item For every $X \in \F$, an in-tree $T$ in $G[X]=(X,E[X])$, where $E[X]=\{(u,v) \in E \mid u,v \in X\}$, is min-cost iff (a)~$T \subseteq E_y$, and (b)~$|T \cap \delta^{+}(Y)| = 1$ for all $Y \in \F$ such that $Y \subset X$. 
\end{enumerate}
\end{restatable}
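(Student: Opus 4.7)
\begin{proof_sketch}
The plan is to handle parts (i) and (ii) via structural invariants of the algorithm, and parts (iii) and (iv) via LP duality with complementary slackness.

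For (ii) I will induct on the iterations of the while loop. Initially $\F=\{\{s\}:s\in S\}$ is laminar. When the algorithm raises $y(X)>0$ for a newly chosen sink SCC $X$ of the current tight graph, I need to show that every $Y$ previously added to $\F$ satisfies $Y\subseteq X$ or $Y\cap X = \emptyset$. The key invariant is that once $Y$ is processed, $Y$ is strongly connected in $E_y$ and the algorithm adds an outgoing tight edge from $Y$, so $Y$ remains inside a single SCC of $E_y$ in all later iterations. At the current step this SCC therefore either equals $X$ (yielding $Y\subseteq X$) or is disjoint from $X$, preserving laminarity. For (i), I will argue that at every iteration the sink SCCs contained in $N$ are pairwise disjoint and that raising $y(\cdot)$ on one of them does not disqualify the others; hence any legal ordering of line 3 yields the same $(\F,E_y,y)$.

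For (iii) I will invoke the arc-cut LP for minimum-cost branchings, minimizing $\sum_e c(e)x_e$ subject to $\sum_{e\in\delta^+(X)} x_e\geq 1$ for every $X\subseteq N$ and $x_e\geq 0$. The dual maximizes $\sum_{X} y(X)$ subject to $\sum_{X:e\in\delta^+(X)}y(X)\leq c(e)$ and $y\geq 0$. The $y$ computed by the algorithm is dual feasible by construction, so for any branching $B$ weak duality gives
\[ c(B) \;=\; \sum_{e\in B}c(e) \;\geq\; \sum_{X\subseteq N} y(X)\,|B\cap\delta^+(X)| \;\geq\; \sum_{X\in\F,\,X\subseteq N} y(X), \]
and both inequalities become equalities precisely under complementary slackness, i.e., (a) $B\subseteq E_y$ and (b) $|B\cap\delta^+(X)|=1$ for all $X\in\F$ with $X\subseteq N$. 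The existence of a branching achieving this bound follows from Fulkerson's primal-construction phase, which simultaneously certifies that $y$ is dual optimal and that the characterization is exact in both directions.

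For (iv) I will mirror the LP argument on the sub-instance given by $G[X]$. Laminarity from (ii) is critical here: for any edge $e\in E[X]$, the only $Y\in\F$ with $y(Y)>0$ and $e\in\delta^+(Y)$ are those with $Y\subsetneq X$, since sets strictly containing $X$ cover both endpoints of $e$ and sets disjoint from $X$ cover neither. Thus $\{y(Y):Y\in\F,\,Y\subsetneq X\}$ is a feasible dual for the min-cost in-tree LP on $G[X]$, and its tight edges are exactly $E_y\cap E[X]$. Applying weak duality and complementary slackness to this restricted LP yields the characterization (a)--(b); existence of an in-tree attaining the bound follows by running Fulkerson's primal phase recursively inside $G[X]$.

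The main obstacle is part (iv): feasibility of the restricted dual is not sufficient, I must also establish its optimality for the in-tree LP on $G[X]$. The plan is to show that the restricted values coincide with the dual solution Fulkerson's algorithm would compute if run directly on a suitably contracted copy of $G[X]$; this is legitimized by uniqueness (i) and the laminar decomposition (ii). Once local optimality is established, complementary slackness closes both directions of the equivalence.
\end{proof_sketch}
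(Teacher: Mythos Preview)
Your treatment of (i)--(iii) matches the paper's: laminarity via an SCC invariant, uniqueness from laminarity, and the arc--cut LP with complementary slackness for the branching characterization. The paper constructs the matching primal branching top--down over the laminar family rather than appealing to ``Fulkerson's primal phase,'' but this is the same argument in different words.

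The gap is in (iv). Your phrase ``the min-cost in-tree LP on $G[X]$'' hides a real issue: an in-tree in $G[X]$ has a free root $v\in X$, and for any proper subset $Z\subsetneq X$ that contains $v$ the tree need not leave $Z$ at all. Hence the cut constraints $\sum_{e\in\delta^+(Z)}x_e\ge 1$ for all $Z\subsetneq X$ are not valid for in-trees, and the restricted dual $\{y(Y):Y\subsetneq X\}$ is not automatically optimal for any well-posed relaxation. Your proposed fix (re-running Fulkerson on ``a suitably contracted copy of $G[X]$'') does not by itself resolve this, because Fulkerson also needs a designated sink set.

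The paper handles this with a concrete device you are missing: it adjoins a dummy root $r$ to $G[X]$ with edges $(u,r)$ of cost $c^\ast=\max_e c(e)+1$ for every $u\in X$. Now every in-tree of $G[X]$ extends to an $r$-tree of the augmented graph by adding a single edge to $r$, so the standard branching LP applies verbatim with cut constraints for all $Z\subseteq X$. The dual is then $\hat y(Z)=y(Z)$ for $Z\subsetneq X$ together with the explicit value $\hat y(X)=c^\ast-\max_{u\in X}\sum_{Z\subsetneq X:\,(u,r)\in\delta^+(Z)}y(Z)$, chosen so that at least one edge to $r$ is tight. Feasibility and a matching primal (built top--down over the laminar family inside $X$, starting from a tight edge to $r$) give optimality, and complementary slackness then yields (a)--(b) exactly as in (iii). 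Once you add the dummy root and set $\hat y(X)$, your sketch goes through.
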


\newcommand{\dist}{10mm}
\renewcommand{\dist}{8mm}
\definecolor{color1}{HTML}{07509b}
\definecolor{color2}{HTML}{5f800b}
\definecolor{color3}{HTML}{9f3761}
\definecolor{color_out}{HTML}{999999}
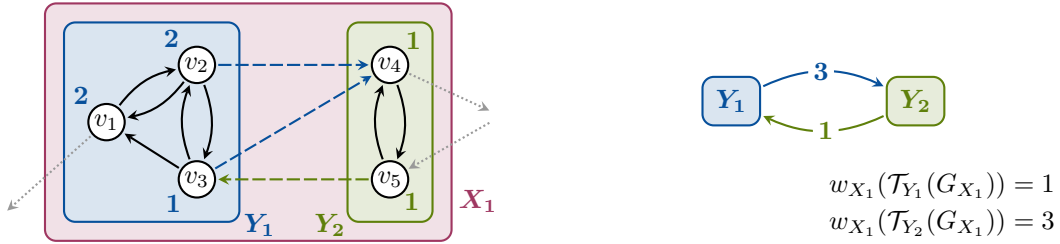
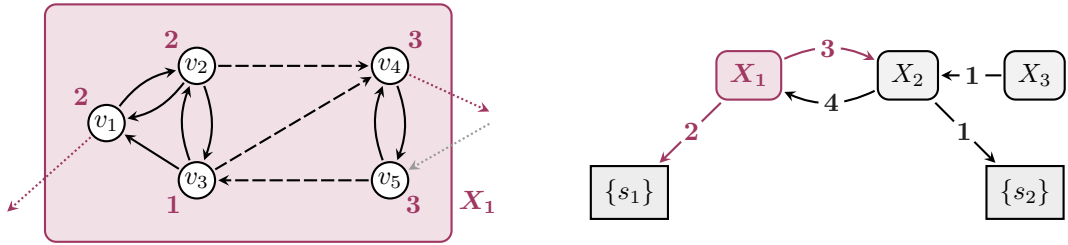
\begin{figure}
  \begin{subfigure}[t]{0.49\textwidth}
    \centering
    \begin{tikzpicture}[node distance=1.3cm]

    \begin{pgfonlayer}{foreground layer}
      \node (v1) [non_abs_small] at (0, 0) {$v_1$};
      \node (v2) [non_abs_small, above right = .45*\dist and \dist of v1]  {$v_2$};
      \node (v3) [non_abs_small, below right = .45*\dist and \dist of v1]  {$v_3$};
      \node (v4) [non_abs_small, right = 2.5*\dist of v2]  {$v_4$};
      \node (v5) [non_abs_small, right = 2.5*\dist of v3]  {$v_5$};

      \phantom{
        \node (s1) [below left = 1.2*\dist and 1.4*\dist of v1] {};
        \node (s2) [below right = 0.4*\dist and 1.4*\dist of v4] {};
      }
      
      \draw[->] (v1) edge[del, bend left] (v2);
      \draw[->] (v2) edge[del, bend left] (v1);
      \draw[->] (v2) edge[del1, bend left] (v3);
      \draw[->] (v3) edge[del1, bend left] (v2);
      \draw[->] (v3) edge[del1] (v1);
      \draw[->] (v4) edge[del1, bend left] (v5);
      \draw[->] (v5) edge[del1, bend left] (v4);
      
      \draw[->] (v2) edge[del2, color=color1] (v4);
      \draw[->] (v3) edge[del2, color=color1] (v4);
      \draw[->] (v5) edge[del2, color=color2] (v3);

      \draw[->] (v1) edge[del3, color=color_out] (s1);
      \draw[->] (v4) edge[del3, color=color_out] (s2);
      \draw[->] (s2) edge[del3, color=color_out] (v5);
      \end{pgfonlayer}
      
      
      \node (Y1) [dual_set=color1, fit=(v1)(v2)(v3), inner sep = 8pt] {};
      \node (Y2) [dual_set=color2, fit=(v4)(v5), inner sep = 8pt] {};

    \begin{pgfonlayer}{background layer}
      \node (X) [dual_set=color3, fit=(Y1)(Y2), inner sep = 6pt] {};
    \end{pgfonlayer}
      
      \node [dual_set_label, color=color1, right = -4pt of Y1.south east] {$\bm{Y_1}$};
      \node [dual_set_label, color=color2, left = -5pt of Y2.south west] {$\bm{Y_2}$};      
      \node [dual_set_label, color=color3, right = -3pt of X.east, yshift=-30pt] {$\bm{X_1}$};

      \node (w1) [weight, color=color1, above left = -5pt of v1] {2};
      \node (w2) [weight, color=color1, above left = -5pt of v2] {2};
      \node (w3) [weight, color=color1, below left = -6pt of v3] {1};
      \node (w4) [weight, color=color2, above right = -6pt of v4] {1};
      \node (w5) [weight, color=color2, below right = -7pt of v5] {1};
  
    \end{tikzpicture}
    \caption{The delegation graph $(G,c)$ restricted to the set $X_1$. Nodes $v \in Y_1$ are labeled with $\w_{Y_1}(v)$ and nodes in $v\in Y_2$ are labeled with $\w_{Y_2}(v)$. Recall, that $\w_{Y_1}(v)$ is the total number of min-cost $v$-trees in $G[Y_1]$.
    \label{fig:algorithm_step_a}}
  \end{subfigure}\hfill
  \begin{subfigure}[t]{0.49\textwidth}
    \centering
    \begin{tikzpicture}[node distance=1.3cm]
      \node (Y1) [dual_set=color1] {$\bm{Y_1}$};
      \node (Y2) [dual_set=color2, right = 2*\dist of Y1] {$\bm{Y_2}$};

      \draw[->] (Y1) edge[edge, bend left, color=color1] node [edge_label, text=color1] {$\bm{3}$} (Y2);
      \draw[->] (Y2) edge[edge, bend left, color=color2] node [edge_label, text=color2] {$\bm{1}$} (Y1);
      
      \node (wY1) [weight, below = 12pt of Y2, xshift=10pt] {$w_{X_1}(\mathcal{T}_{Y_1}(G_{X_1})) = 1$};
      \node (wY2) [weight, below = -5pt of wY1, xshift=0pt] {$w_{X_1}(\mathcal{T}_{Y_2}(G_{X_1})) = 3$};
      \phantom{\node (wY1) [weight, below = 12pt of Y1, xshift=-10pt] {$w_X(\mathcal{T}_{Y_1}(G_X)) = 1$};}
    \end{tikzpicture}
    \caption{Digraph $G_{X_1}$ with edge weights $w_{X_1}$. Based on this input, for $i \in \{1,2\}$, we calculate the total weight of all $Y_i$-trees in $G_{X_1}$, i.e., $w_{X_1}(\mathcal{T}_{Y_i}(G_{X_1}))$.\label{fig:algorithm_step_b}}
  \end{subfigure}
  \begin{subfigure}[t]{0.49\textwidth}
    \centering
    \begin{tikzpicture}[node distance=1.3cm][c]
      \node (v1) [non_abs_small] at (0, 0) {$v_1$};
      \node (v2) [non_abs_small, above right = .45*\dist and \dist of v1]  {$v_2$};
      \node (v3) [non_abs_small, below right = .45*\dist and \dist of v1]  {$v_3$};
      \node (v4) [non_abs_small, right = 2.5*\dist of v2]  {$v_4$};
      \node (v5) [non_abs_small, right = 2.5*\dist of v3]  {$v_5$};

      \phantom{
        \node (s1) [below left = 1.2*\dist and 1.4*\dist of v1] {};
        \node (s2) [below right = 0.4*\dist and 1.4*\dist of v4] {};
      }

      \draw[->] (v1) edge[del, bend left] (v2);
      \draw[->] (v2) edge[del, bend left] (v1);
      \draw[->] (v2) edge[del1, bend left] (v3);
      \draw[->] (v3) edge[del1, bend left] (v2);
      \draw[->] (v3) edge[del1] (v1);
      \draw[->] (v4) edge[del1, bend left] (v5);
      \draw[->] (v5) edge[del1, bend left] (v4);
      
      \draw[->] (v2) edge[del2] (v4);
      \draw[->] (v3) edge[del2] (v4);
      \draw[->] (v5) edge[del2] (v3);
      
      \draw[->] (v1) edge[del3, color=color3] (s1);
      \draw[->] (v4) edge[del3, color=color3] (s2);
      \draw[->] (s2) edge[del3, color=color_out] (v5);
      
      \phantom{
          \node (Y1) [dual_set=color1, fit=(v1)(v2)(v3), inner sep = 8pt] {};
          \node (Y2) [dual_set=color2, fit=(v4)(v5), inner sep = 8pt] {};
      }
      \begin{pgfonlayer}{background layer} 
      \node (X) [dual_set=color3, fit=(Y1)(Y2), inner sep = 6pt] {};
    \end{pgfonlayer}

    \node (dummy) [above = 0pt of X]{$\;$};
      
      \node [dual_set_label, color=color3, right = -3pt of X.east, yshift=-30pt] {$\bm{X_1}$};
            
      \node (w1) [weight, color=color3, above left = -5pt of v1] {2};
      \node (w2) [weight, color=color3, above left = -5pt of v2] {2};
      \node (w3) [weight, color=color3, below left = -6pt of v3] {1};
      \node (w4) [weight, color=color3, above right = -5pt of v4] {3};
      \node (w5) [weight, color=color3, below right = -6pt of v5] {3};
      
    \end{tikzpicture}
    \caption{The graph $(G,c)$ restricted to $X_1$. Every node $v \in X_1$ is labeled with $\w_{X_1}(v)$, which was calculated by multiplying the weights from Figure \ref{fig:algorithm_step_a} with the results from Figure \ref{fig:algorithm_step_b}.\label{fig:algorithm_step_c}}
  \end{subfigure}\hfill
  \begin{subfigure}[t]{0.49\textwidth}
    \centering
    \begin{tikzpicture}[node distance=1.3cm]
      \node (X1) [dual_set=color3] {$\bm{X_1}$};
      \node (X2) [dual_set_black, right = 1.5*\dist of X1] {$X_2$};
      \node (X3) [dual_set_black, right = \dist of X2] {$X_3$};
      \node (s1) [abs_set, below left = \dist and 0.7*\dist of X1] {$\{s_1\}$};
      \node (s2) [abs_set, below right = \dist and 0.7*\dist of X2] {$\{s_2\}$};
      
      \draw[->] (X1) edge[edge, bend left, color=color3] node [edge_label, text=color3] {$\bm{3}$} (X2);
      \draw[->] (X1) edge[edge, color=color3] node [edge_label, text=color3] {$\bm{2}$} (s1);
      \draw[->] (X2) edge[edge, bend left] node [edge_label] {$\bm{4}$} (X1);
      \draw[->] (X2) edge[edge] node [edge_label] {$\bm{1}$} (s2);
      \draw[->] (X3) edge[edge] node [edge_label] {$\bm{1}$} (X2);
      
      \node (spacer) [below = 0pt of s1] {};
    \end{tikzpicture}
    \caption{An example for $(G_X,w_X)$, where $X=N\cup S$. In the last iteration, this graph is translated into a Markov chain and the absorbing probabilities are returned.\label{fig:algorithm_step_d}}
  \end{subfigure}
  \caption{Two iterations of \Cref{alg:dual}. Costs are depicted by edge patterns (solid equals cost $1$, dashed equals cost $2$, and dotted equals cost $3$) and weights are depicted by numbers on the edges.}
  \label{fig:algorithm_step}
\end{figure}

\textbf{Intuition and notation for \Cref{alg:dual}.} For our algorithm, the crucial statements in \Cref{lem:certficates} are (ii) and (iii). First, because $\F$ forms a laminar family, there exists a natural tree-like structure among the sets. We say that a set $Y \in \F$ is a \emph{child} of a set $X \in \F$, if $Y \subset X$ and there does not exist a $Z \in \F$, such that $Y \subset Z \subset X$. Moreover, for some $X \in \F $ or $X =N \cup S$, we define $G_X = (V_X,E_X)$ as the tight subgraph that is restricted to the node set $X$ and contracts all children of $X$. Formally, $V_{X} = \{Y \mid Y \text{ is a child of } X\}$ and $E_X = \{(Y,Y') \mid (u,v) \in E_y, u \in Y, v \in Y', \text{ and } Y,Y'\in V_X\}$. In the following we first focus on the graph corresponding to the uppermost layer of the hierarchy, i.e., $G_X$ for $X=N\cup S$. Now, statement (iii) implies that every min-cost branching $B$ in $G$ leaves every child of $X$ exactly once and only uses tight edges. Hence, if we project $B$ to an edge set $\hat{B}$ in the contracted graph $G_X$, then $\hat{B}$ forms a branching in $G_X$. However, there may exist many min-cost branchings in $G$ that map to the same branching in $G_X$. The crucial insight is that we can construct a weight function $w_X$ on the edges of $G_X$, such that the weighted number of branchings in $G_X$ equals the number of min-cost branchings in $G$.
This function is constructed by calculating for every child $Y$ of $X$ and every node $v \in Y$, the number of min-cost $v$-trees inside the graph $G[Y] = (Y,E[Y])$, where $E[Y]=\{(u,v) \in E \mid u,v \in Y\}$. This number, denoted by $\w_{Y}(v)$, can be computed by recursively applying the matrix tree theorem. Coming back to the graph $G_X$, however, we need a more powerful tool since we need to compute the (relative) number of weighted branchings in $G_X$ connecting any node to a sink node. 
Thus, we introduce the Markov chain tree theorem (in a slightly modified version so that it can deal with Markov chains induced by weighted digraphs).

For a weighted digraph $(G,w)$ we define the corresponding Markov chain $(G', P)$ as follows: The digraph $G'$ is derived from $G$ by adding self-loops, and for $(u,v) \in E(G')$ we let
\[
    P_{u,v} = \begin{cases}
        \frac{w(u,v)}{\Delta} & \text{ if } u \neq v\\ 
        1 - \frac{\sum_{e \in \delta^+(u)} w(e)}{\Delta} & \text{ if } u = v,
    \end{cases}
\]
where $\Delta = \max_{v \in V} \sum_{e \in \delta^+(v)} w(e)$. 

\begin{lemma}[Markov chain tree theorem (\citet{LeRi86a})] \label{thm:markov-tree}
    Consider a weighted digraph $(G,w)$ and the corresponding Markov chain $(G',P)$ and let $Q = \lim_{\tau \rightarrow \infty} \frac{1}{\tau} \sum_{i = 0}^\tau P^\tau$. Then, the entries of the matrix $Q$ are given by 
    \[ 
        Q_{u,v} = \frac{\sum_{B \in \mathcal{B}_{u,v}} \prod_{e \in B} P_{e}}{\sum_{B \in \mathcal{B}} \prod_{e \in B} P_e} = \frac{\sum_{B \in \mathcal{B}_{u,v}} \prod_{e \in B} w(e)}{\sum_{B \in \mathcal{B}} \prod_{e \in B} w(e)} = \frac{\sum_{B \in \mathcal{B}_{u,v}} w(B)}{\sum_{B \in \mathcal{B}} w(B)} \quad.
    \]
\end{lemma}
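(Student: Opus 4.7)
\begin{proof_sketch}

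The statement is the Markov chain tree theorem of \cite{LeRi86a}, extended to possibly reducible chains by using max-cardinality branchings in place of spanning in-trees. My plan is to verify that $\tilde Q_{u,v} := w(\mathcal{B}_{u,v})/w(\mathcal{B})$ satisfies the properties that uniquely characterize the Cesàro limit $Q$: it is stochastic, satisfies $\tilde Q P = P\tilde Q = \tilde Q$, and vanishes on transient columns. Because no branching contains a self-loop, $\mathcal{B}(G)=\mathcal{B}(G')$ and every $B\in\mathcal{B}$ has the same number of edges; hence the common factor $\Delta^{|B|}$ cancels in the ratio and the two middle expressions in the statement coincide.

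A preliminary structural analysis pins down the recurrent classes of $(G',P)$ as precisely the minimal strongly connected components (min-SCCs) of $G$---those SCCs with no outgoing edges in the condensation---and identifies the sinks of any max-cardinality branching as a system of distinct representatives, one per min-SCC. Consequently, from every $u$ the unique $B$-path terminates at a single sink, yielding both $\sum_v \tilde Q_{u,v}=1$ and $\tilde Q_{u,v}=0$ for transient $v$. In the irreducible special case where $G$ is a single min-SCC, max-cardinality branchings are exactly the spanning in-trees and $\mathcal{B}_{u,v}=\mathcal{T}_v$ for every $u$; the classical MCTT then gives the stationary distribution $\pi_v = w(\mathcal{T}_v)/\sum_{v'} w(\mathcal{T}_{v'}) = \tilde Q_{u,v}$, matching $Q_{u,v}=\pi_v$.

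For the general case, fix a min-SCC $C_i \ni v$. Each $B \in \mathcal{B}_{u,v}$ decomposes uniquely into (a) an in-tree of $G[C_i]$ rooted at $v$, (b) for each other min-SCC $C_j$ an in-tree rooted at some representative, and (c) a branching on the transient nodes that connects each such node to one of the chosen representatives, subject to $u$'s component ending at $v$'s representative. Summing the three independent weight-factors and dividing by the analogous decomposition of $\mathcal{B}$ produces
\[
\tilde Q_{u,v} \;=\; \frac{w(\mathcal{T}_v(G[C_i]))}{\sum_{v'\in C_i} w(\mathcal{T}_{v'}(G[C_i]))} \cdot \rho(u,C_i),
\]
where $\rho(u,C_i)$ is the weighted fraction of transient-part branchings routing $u$ into $C_i$. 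By the irreducible case applied to $G[C_i]$, the first factor is the stationary probability of $v$ within $C_i$, so the right-hand side matches the canonical expression $Q_{u,v} = \rho(u,C_i)\,\pi_{C_i}(v)$ as soon as the second factor is identified with the true absorption probability.

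The main obstacle is this last identification: showing that the combinatorial $\rho(u,C_i)$ defined via branchings coincides with the Markov chain absorption probability of $u$ into $C_i$. I would handle it by applying an MCTT-style argument to the auxiliary chain obtained by contracting every min-SCC into a single absorbing state, using \Cref{lem:absorbing_self_loops} to deal with the self-loops introduced by rescaling and \Cref{lem:matrix-tree} to count the transient-part branchings. A more direct alternative that sidesteps the case split is to verify $P\tilde Q = \tilde Q$ as the single identity $w(\mathcal{B}_{u,v}) = \sum_{(u,w)\in E} P_{u,w}\, w(\mathcal{B}_{w,v})$ via a weight-preserving bijection on pairs (branching, outgoing edge at $u$)---the combinatorial heart of any proof of the classical MCTT.

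\end{proof_sketch}
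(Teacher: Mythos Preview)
The paper does not prove this lemma at all: it is stated as a citation to \citet{LeRi86a} and used as a black box in the proofs of \Cref{thm:alg-borda} and \Cref{thm:equivalence}. So there is no paper proof to compare your sketch against.

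That said, your sketch is a reasonable outline of how one would actually establish the result in the generality the paper needs (reducible chains, max-cardinality branchings rather than spanning in-trees). The structural observations are sound: self-loops never appear in a branching so $\mathcal{B}(G)=\mathcal{B}(G')$, all max-cardinality branchings have the same number of edges so the factor $\Delta^{-|B|}$ cancels, and the sinks of any such branching form a transversal of the bottom SCCs. Your decomposition into an in-tree inside each recurrent class times a ``routing'' factor on the transient part is the right picture, and you correctly flag that identifying the combinatorial $\rho(u,C_i)$ with the genuine absorption probability is where the work lies. One caution on your uniqueness characterization: stochasticity, $P\tilde Q=\tilde Q P=\tilde Q$, and vanishing on transient columns are not by themselves enough to pin down $Q$ unless you also use that for $u$ inside a recurrent class $C_i$ the row $\tilde Q_{u,\cdot}$ is exactly the stationary distribution of $C_i$ (this is your boundary condition, and it is what forces the harmonic extension to be unique). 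Of your two proposed routes for the remaining identification, the second---verifying $P\tilde Q=\tilde Q$ directly via a weight-preserving involution on pairs (branching, edge out of $u$)---is cleaner and avoids the auxiliary contracted chain entirely; it is essentially the original argument of Leighton and Rivest.
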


\begin{algorithm}[t]
  \caption{Computation of \mixborda} \label{alg:dual}
  \begin{algorithmic}[1]
    \State compute ($\F,E_y,y$), set $\F'=\F \cup \{N \cup S\}$ and label its elements ``unprocessed'' \Comment{\Cref{alg:cert}}
    \State $\w_{\{v\}}(v) \leftarrow 1$ for all $v \in N \cup S$, label singletons as ``processed''
    \Do $\;$ pick unprocessed $X \in \F'$ for which all children are processed, label $X$ as ``processed''
      \State set $w_X(Y,Y') \leftarrow \sum_{(u,v) \in E_y \cap (Y \times Y' )} \w_Y(u)$, for all children $Y$ and $Y'$ of $X$
      \If{$X \neq N \cup S$}
        \For{all children $Y$ of $X$}
          \State $\w_X(v) \leftarrow w_X(\mathcal{T}_Y(G_X)) \cdot \w_Y(v)$ for all $v \in Y$ \Comment{\Cref{lem:matrix-tree}} \label{alg-line:mat-tree}
        \EndFor
      \Else  $\;$ \parbox[t]{.65\textwidth}{compute absorbing probability matrix 
      $Q$ for the Markov chain corresponding to $(G_X,w_X)$} \Comment{\Cref{thm:markov-tree}}  \label{alg-line:mark-tree}
      \EndIf
    \doWhile{$X \neq N \cup S$}
    \State \Return for all $v \in N$ and $s \in S$: $A_{v,s} \leftarrow Q_{Y_v,\{s\}}$,  where $Y_v$ child of $N \cup S$ with $v \in Y_v$ 
  \end{algorithmic}
\end{algorithm}

We formalize \Cref{alg:dual}, which takes as input a delegation graph $(G,c)$ and, in contrast to the intuition above, works in a bottom-up fashion. We refer to \Cref{fig:algorithm_step} for an illustration.

\begin{restatable}{theorem}{thmAlgBorda}
\label{thm:alg-borda}
        \Cref{alg:dual} returns \mixborda and runs in $\text{\normalfont poly}(n)$. 
\end{restatable}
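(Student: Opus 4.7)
The plan is to proceed by induction on the laminar hierarchy $\F'$, maintaining the invariant that once $X$ has been processed, $\w_X(v)$ equals the number of min-cost $v$-trees of $G[X]$ for every $v\in X$. At the terminal iteration $X=N\cup S$, I then argue that feeding $(G_X,w_X)$ into the Markov chain tree theorem returns precisely the fractional assignment defining \mixborda.

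For the inductive step, fix $X\in\F'$ with children $Y_1,\dots,Y_k$ and assume the invariant holds for each child. \Cref{lem:certficates}(iv) says that a min-cost in-tree $T$ of $G[X]$ with sink $v\in Y^*$ (i) uses only tight edges and (ii) leaves each child $Y_i$ exactly once. I would use this to set up a bijective decomposition of such $T$ into: (a) an in-tree $\hat T$ of the contracted graph $G_X$ with sink $Y^*$; (b) for each child $Y_i\neq Y^*$ a min-cost in-tree of $G[Y_i]$ rooted at the unique source $u_i\in Y_i$ of $T$'s outgoing edge; and (c) a min-cost in-tree of $G[Y^*]$ rooted at $v$. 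Summing over all choices, the definition $w_X(Y,Y')=\sum_{(u,v)\in E_y\cap(Y\times Y')}\w_Y(u)$ is precisely what turns the product of child-contributions into $\prod_{(Y,Y')\in\hat T} w_X(Y,Y')=w_X(\hat T)$. Hence the total number of min-cost $v$-trees of $G[X]$ equals $w_X(\mathcal{T}_{Y^*}(G_X))\cdot \w_{Y^*}(v)$, and the weighted matrix-tree theorem (\Cref{lem:matrix-tree}) evaluates the first factor as a single determinant, which is exactly what line~7 computes.

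At the terminal iteration $X=N\cup S$ the same decomposition applies to each max-cardinality min-cost branching $B\in\mathcal{B}^*(G)$ (rather than an in-tree, since $G$ now has several sinks). Projecting $B$ yields a max-cardinality branching $\hat B$ in $G_X$, and the number of $B$ projecting to $\hat B$ is again $w_X(\hat B)$. Consequently $|\mathcal{B}^*_{v,s}(G)|=w_X(\mathcal{B}_{Y_v,\{s\}}(G_X))$ and $|\mathcal{B}^*(G)|=w_X(\mathcal{B}(G_X))$, so \Cref{thm:markov-tree} gives $Q_{Y_v,\{s\}}=|\mathcal{B}^*_{v,s}(G)|/|\mathcal{B}^*(G)|=A_{v,s}$, as required. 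For the runtime, \Cref{alg:cert} is polynomial (see \citep{kamiyama2014arborescence}), the laminar family $\F'$ has cardinality $O(n)$, and each iteration only has to assemble $w_X$ and compute one determinant (or, at the top level, the limiting matrix $Q$) on a matrix of size at most $n$, so the overall runtime is $\mathrm{poly}(n)$.

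The main obstacle is the bijective decomposition argument at the heart of the inductive step: one must verify that every min-cost in-tree of $G[X]$ is uniquely reconstructed from the data (a)--(c) and that no extraneous configurations are produced. This requires combining laminarity of $\F$ with the ``exactly one outgoing edge'' clauses of \Cref{lem:certficates}(iii)--(iv) to identify the unique source $u_i$ in each child and to rule out double counting. A secondary subtlety is that at the top level we are counting branchings with several sink components instead of in-trees, so I would spell out explicitly that the Markov chain tree theorem still applies because the casting voters $S$ are precisely the absorbing states of the constructed chain and the formula in \Cref{thm:markov-tree} is insensitive to the self-loop augmentation used to define $P$.
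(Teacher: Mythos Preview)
Your proposal is correct and follows essentially the same approach as the paper: an induction over the laminar hierarchy establishing that $\w_X(v)$ counts min-cost $v$-trees in $G[X]$ via the decomposition afforded by \Cref{lem:certficates}(iv), followed by the analogous branching argument at $X=N\cup S$ and an appeal to \Cref{thm:markov-tree}. The paper's runtime analysis is more explicit about the bit-complexity of the determinants (bounding all intermediate values by $n^n$), but your high-level argument is the same.
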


\begin{proof_directly}
We start by showing by induction that the given interpretation of the weight function on the nodes is correct, i.e., for any $v \in N$, $\w_X(v)$ corresponds to the number of min-cost $v$-trees in the graph $G[X]$. The claim is clearly true for any singleton, since $\w_{\{v\}}(v) = 1$ and the number of $v$-trees in $(\{v\},\emptyset)$ is one, i.e., the empty set is the only $v$-tree. Now, we fix some $X \in \F'$ and assume that the claim is true for all children of $X$. In the following, we fix $v \in X$ and argue that the induction hypothesis implies that the claim holds for $\w_X(v)$ as well. \\
For any node $u \in X$, let $Y_u \in \F$ be the child of $X$ containing node $u$. Moreover, let $\mathcal{T}_v^*(G[X])$ (or short $\mathcal{T}_v^*$) be the set of min-cost $v$-trees in $G[X]$, and $\mathcal{T}_{Y_v}(G_X)$ (or short $\mathcal{T}_{Y_v}$) be the set of $Y_v$-trees in $G_X$. Lastly, for any $u \in X$, let $\mathcal{T}_u^*(G[Y_u])$ be the set of min-cost $u$-trees in $G[Y_u]$. In the following, we argue that there exists a many-to-one mapping from $\mathcal{T}_v^*$ to $\mathcal{T}_{Y_v}$. Note that, by statement (iv) in \Cref{lem:certficates}, every min-cost in-tree $T$ in $G[X]$ (hence, $T \in \mathcal{T}^*_v$) leaves every child of $X$ exactly once via a tight edge. Therefore, there exists a natural mapping to an element of $\mathcal{T}_{Y_v}$ by mapping every edge in $T$ that connects two children of $X$ to their corresponding edge in $G_X$. More precisely, $\hat{T} = \{(Y,Y') \in E_X \mid T\cap\delta^{+}(Y)\cap\delta^{-}(Y') \neq \emptyset\}$ is an $Y_v$-tree in $G_X$ and hence an element of $\mathcal{T}_{Y_v}$.

Next, we want to understand how many elements of $\mathcal{T}^*_v$ map to the same element in $\mathcal{T}_{Y_v}$. Fix any $\hat{T} \in \mathcal{T}_{Y_v}$. We can construct elements of $\mathcal{T}^*_v$ by combining (an extended version of) $\hat{T}$ with min-cost in-trees within the children of $X$, i.e., with elements of the sets $\mathcal{T}^*_u(G[Y_u])$, $u \in X$. More precisely, for any edge $(Y,Y') \in \hat{T}$, we can independently chose any of the edges in $(u,u') \in \Etight \cap (Y \times Y')$ and combine it with any min-cost $u$-tree in the graph $G[Y]$. This leads to $$\Big(\prod_{(Y,Y') \in \hat{T}}\sum_{(u,u') \in \Etight \cap (Y \times Y')} \w_{Y}(u)\Big ) \w_{Y_v}(v) = \big(\prod_{(Y,Y') \in \hat{T}} w_X(Y,Y') \big) \cdot \w_{Y_v}(v)$$ many different elements from $\mathcal{T}^*_v$ that map to $\hat{T} \in \mathcal{T}_{Y_v}$. 
Hence, $$|\mathcal{T}^*_v| = \sum_{\hat{T} \in \mathcal{T}_{Y_v}} \prod_{(Y,Y') \in \hat{T}} w_Y(Y,Y') \cdot \w_{Y_v}(v) = w_X(\mathcal{T}_{Y_v}) \cdot \w_{Y_v}(v) = \w_X(v),$$ where the last inequality follows from the definition of $\w_X(v)$ in the algorithm. This proves the induction step, i.e., $\w_X(v)$ corresponds to the number of min-cost $v$-trees in the graph $G[X]$. 

Now, let $X=N \cup S$, i.e., we are in the last iteration of the algorithm. Due to an analogous reasoning as before, there is a many-to-one mapping from the min-cost branchings in $G$ to branchings in $G_X$. More precisely, for every branching $B \in \mathcal{B}_{Y,\{s\}}(G_X)$, there exist $$\prod_{(Y,Y') \in B}w_X(Y,Y') = w_X(B)$$ branchings in $G$ that map to $B$. Hence, by the Markov chain tree theorem (\Cref{thm:markov-tree}), we get 
$$
    A_{v,s} = Q_{v,s} = \frac{\sum_{B \in \mathcal{B}_{Y_v,\{s\}}(G_X)} w_X(B)}{\sum_{B \in \mathcal{B}(G_X)} w_X(B)} = \frac{\sum_{B \in \mathcal{B}^*_{v,s}(G)} 1}{\sum_{B \in \mathcal{B}^*(G)} 1} \quad,
$$
where $(G_X',P)$ is the Markov chain corresponding to $G_X$ and $Q = \lim_{\tau \rightarrow \infty} \frac{1}{\tau} \sum_{i = 0}^\tau P^\tau$. This equals the definition of \mixborda.

Lastly, we argue about the running time of the algorithm. For a given delegation graph $(G,c)$, let $n = V(G)$, i.e., the number of voters. \Cref{alg:cert} can be implemented in $\mathcal{O}(n^3)$. That is because, the while loop runs for $\mathcal{O}(n)$ iterations (the laminar set family $\F$ can have at most $2n-1$ elements), and finding all strongly connected components in a graph can be done in $\mathcal{O}(n^2)$ (e.g., with Kosaraju's algorithm \citep{hopcroft1983data}). Coming back to the running time of \Cref{alg:dual}, we note that the do-while loop runs for $\mathcal{O}(n)$ iterations, again, due to the size of $\F'$. In line 7, the algorithm computes $\mathcal{O}(n)$ times the number of weighted spanning trees with the help of \Cref{lem:matrix-tree} (\cite{Tutt48a}). Hence, the task is reduced to calculating the 
determinant of a submatrix of the laplacian matrix. Computing an integer determinant can be done in polynomial time in $n$ and $\log(M)$, if $M$ is an upper bound of all absolute values of the matrix\footnote{More precisely, it can be computed in $\mathcal{O}((n^4 \log(nM) + n^3 \log^2(nM))*(\log^2 n + (\log\log M)^2))$ \citep{GaGe13a}}. Note, that all values in every Laplacian (the out-degrees on the diagonals and the multiplicities in the other entries) as well as the results of the computation are upper-bounded by the total number of branchings in the original graph $G$ (this follows from our argumentation about the interpretation of $t_X(v)$ in the proof of \Cref{thm:alg-borda}), hence in particular by $n^n$. Therefore, the running time of each iteration of the do-while loop is polynomial in $n$. In the final step we compute the absorbing probabilities of the (scaled down version) of the weighted graph $(G_X,w_X)$ (where $X=N\cup S$). For that, we need to compute the inverse of a $\mathcal{O}(n) \times \mathcal{O}(n)$ matrix, which can be done using the determinant and the adjugate of the matrix. Computing these comes down to computing $\mathcal{O}(n^2)$ determinants, for which we argued before that it is possible in polynomial time\footnote{We argued this only for integer matrices, but we can transform the rational matrix into an integer one by scaling it up by a factor which is bounded by $n^n$.}. Summarizing, this gives us a running time of Algorithm \ref{alg:cert} in $\mathcal{O}((n^7 \log(n) + n^4 \log(n\log(n)))*(\log^2 n + (\log(n\log(n)))^2))$.
\end{proof_directly}

\section{Equivalence of \mixborda and \randwalk} \label{sec:equivalence}

With the help of the Markov chain tree theorem, as stated in \Cref{sec:borda}, we can show the equivalence of the \randwalk and \mixborda.

\begin{theorem}\label{thm:equivalence}
    Let $(G,c)$ be a delegation graph and $A$ and $\hat{A}$ be the assignments returned by \mixborda and the \randwalk, respectively. Then, $A=\hat{A}$. 
\end{theorem}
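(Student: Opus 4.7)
The plan is to apply the Markov chain tree theorem (Lemma \ref{thm:markov-tree}) to the weighted digraph $(G, w^{(\varepsilon)})$ with edge weights $w^{(\varepsilon)}(e) = \varepsilon^{c(e)}$ for each $\varepsilon \in (0,1]$, read off a closed form for the absorbing probabilities, and then take the limit $\varepsilon \to 0$. The end of the pipeline should spit out exactly the ratio of min-cost branchings that defines \mixborda.

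The first step is to reconcile the Markov chain $(G',P)$ built in Lemma \ref{thm:markov-tree} with the Markov chain $(G,P^{(\varepsilon)})$ used by the \randwalk. Both chains assign, from every non-absorbing node $u$, outgoing transition probabilities proportional to $\varepsilon^{c(u,v)}$; they differ only in the normalization, the former dividing by $\Delta = \max_v \sum_{e \in \delta^+(v)} w^{(\varepsilon)}(e)$ and sending the residual mass onto a self-loop, the latter dividing by $\bar\varepsilon_u$. Applying Lemma \ref{lem:absorbing_self_loops} once per non-absorbing node with the node-dependent probability $p_u = 1 - \bar\varepsilon_u/\Delta$ shows that the two chains produce identical absorbing probability matrices. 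Hence the matrix $Q$ coming out of Lemma \ref{thm:markov-tree} coincides, on entries indexed by $N \times S$, with the matrix that defines $\hat A$.

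The second step is a direct invocation of Lemma \ref{thm:markov-tree}. Since branchings are acyclic and self-loops contain a cycle, the branching sets $\mathcal{B}(G')$ and $\mathcal{B}(G)$ coincide, and the weight of a branching factors as $w^{(\varepsilon)}(B) = \varepsilon^{c(B)}$, where $c(B) = \sum_{e \in B} c(e)$. Therefore, for every $v \in N$ and $s \in S$,
\[
    Q_{v,s} \;=\; \frac{\sum_{B \in \mathcal{B}_{v,s}(G)} \varepsilon^{c(B)}}{\sum_{B \in \mathcal{B}(G)} \varepsilon^{c(B)}}.
\]
For the third step, let $c^\star = \min_{B \in \mathcal{B}(G)} c(B)$ and divide numerator and denominator by $\varepsilon^{c^\star}$. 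Every term with $c(B) > c^\star$ carries a strictly positive power of $\varepsilon$ and vanishes as $\varepsilon \to 0$, while every term with $c(B) = c^\star$ contributes $1$. The surviving ratio is $|\mathcal{B}^*_{v,s}(G)|/|\mathcal{B}^*(G)|$, which is exactly $A_{v,s}$.

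The main obstacle is the bookkeeping in the first step: I must verify that Lemma \ref{lem:absorbing_self_loops}, stated for a single inserted self-loop, can be iterated across all non-absorbing nodes with distinct scaling factors, and that the absorbing states (for which $\bar\varepsilon_s$ is undefined because $\delta^+(s) = \emptyset$) are handled consistently by both constructions, each effectively attaching a probability-$1$ self-loop. Once this equivalence of Markov chains is nailed down, the remaining two steps are a one-line application of the tree theorem followed by an elementary limit computation.
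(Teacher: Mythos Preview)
Your proposal is correct and follows essentially the same three-beat structure as the paper: apply the Markov chain tree theorem for each fixed $\varepsilon$, obtain the ratio $\sum_{B\in\mathcal{B}_{v,s}}\varepsilon^{c(B)}\big/\sum_{B\in\mathcal{B}}\varepsilon^{c(B)}$, and let $\varepsilon\to 0$ so that only the min-cost branchings survive.

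The one place you and the paper diverge is the bookkeeping you flag as the ``main obstacle.'' You pass through the weighted digraph $(G,w^{(\varepsilon)})$, invoke Lemma~\ref{thm:markov-tree} in its $w$-form, and then spend effort reconciling the self-looped chain $(G',P)$ with $(G,P^{(\varepsilon)})$ via repeated applications of Lemma~\ref{lem:absorbing_self_loops}. The paper sidesteps this entirely: it applies the $P_e$-form of Lemma~\ref{thm:markov-tree} directly to $(G,P^{(\varepsilon)})$ and observes that, since every (max-cardinality) branching contains exactly one outgoing edge per node of $N$, the product $\prod_{(u,v)\in B}\varepsilon^{c(u,v)}/\bar\varepsilon_u$ factors as $\bigl(\prod_{u\in N}\bar\varepsilon_u^{-1}\bigr)\,\varepsilon^{c(B)}$, and the common prefactor cancels between numerator and denominator. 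This algebraic cancellation replaces your self-loop argument in one line, so the obstacle you are worried about does not actually need to be overcome. Your route is perfectly valid, just slightly longer; the paper's shortcut is worth knowing.
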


\begin{proof_directly}
    Let $v \in N$ and $s \in S$, then, 
    \begin{align*}
        \hat{A}_{v,s} &= 
        \lim_{\eps \rightarrow 0} \Big(\frac{1}{\tau}\sum_{\tau = 1}^{\infty}(P^{(\eps)})^{\tau}\Big)_{v,s}
         = \lim_{\eps \rightarrow 0} \frac{\sum_{B \in \mathcal{B}_{v,s}} \prod_{e \in B}P^{(\eps)}_e}{\sum_{B \in \mathcal{B}} \prod_{e \in B} P^{(\eps)}_e}  
         = \lim_{\eps \rightarrow 0} \frac{\sum_{B \in \mathcal{B}_{v,s}} \prod_{(u,v) \in B}\frac{\eps^{c(u,v)}}{\bar{\eps}_u}}{\sum_{B \in \mathcal{B}} \prod_{(u,v) \in B} \frac{\eps^{c(u,v)}}{\bar{\eps}_u}} \\ 
         &= \lim_{\eps \rightarrow 0} \frac{\prod_{u \in N} (\bar{\eps}_u)^{-1}\sum_{B \in \mathcal{B}_{v,s}} \eps^{c(B)}}{\prod_{u \in N} (\bar{\eps}_u)^{-1}\sum_{B \in \mathcal{B}}  \eps^{c(B)}}  
         = \frac{\sum_{B \in \mathcal{B}_{v,s}^*} 1}{\sum_{B \in \mathcal{B}^*} 1}  = A_{v,s}.
    \end{align*}
    We first use the definition of the \randwalk, and then apply the Markov chain tree theorem (\Cref{thm:markov-tree}) for fixed $\eps \in (0,1]$ to obtain the second equality. For the third equality, we use the definition of $P^{(\eps)}$, and then factor out the normalization factor $\bar{\eps}_u$ for every $u \in N$. For doing so, it is important to note that for every $v \in N$ and $s \in S$, every branching in $\mathcal{B}_{v,s}$ and every branching in $\mathcal{B}$ contains exactly one outgoing edge per node in $N$. We also remind the reader that $c(B) = \sum_{e \in B} c(e)$. 
    Finally, we resolve the limit in the fifth equality by noting that the dominant parts of the polynomials are those corresponding to min-cost (\textsc{Borda}) branchings. 
\end{proof_directly}

\paragraph{Alternative Interpretation of \Cref{alg:dual}.} We stated \Cref{alg:dual} in terms of counting min-cost branchings. There exists a second natural interpretation that is closer to the definition of the \randwalk, in which we want to compute the limit of the absorbing probabilities of a parametric Markov chain. We give some intuition on this reinterpretation of the algorithm with the example in Figure \ref{fig:algorithm_step}, and later extend this interpretation to a larger class of parametric Markov chains. Every set $X \in \F$ in the Markov chain $(G,P^{(\eps)})$ corresponding to the delegation graph $G$ is a strongly connected component whose outgoing edges have an infinitely lower probability than the edges inside of $X$ as $\eps$ approaches zero. We are therefore interested in the behavior of an infinite random walk in $G[X]$. While in the branching interpretation, the node weight $\w_X(v)$ can be interpreted as the number of min-cost $v$-arborescences in $G[X]$, in the Markov chain interpretation we think of $\w_X(v)$ as an indicator of the relative time an infinite random walk spends in $v$ (or the relative number of times $v$ is visited) in the Markov chain given by the strongly connected graph $G[X]$. Consider the example iteration depicted in Figure \ref{fig:algorithm_step_a}, where we are given an unprocessed $X \in \F$ whose children $Y_1, Y_2$ are all processed. When contracting $Y_1$ and $Y_2$ the weights on the edges should encode how likely a transition is from one set to another, which is achieved by summing over the relative time spent in each node with a corresponding edge. We then translate the resulting graph (Figure \ref{fig:algorithm_step_b}) into a Markov chain and again compute the relative time spend in each node. This computation is equivalent to calculating the sum of weights of all in-trees (up to a scaling factor, see Theorem \ref{thm:markov-tree}). Indeed, we get a ratio of one to three for the time spend in $Y_1$ and $Y_2$. To compute $\w_X(v)$ we multiply the known weight $\w_{Y_v}(v)$ by the newly calculated weight of $Y_v$. In the example this means that since we know, we spend three times as much time in $Y_2$ as in $Y_1$ all weights of nodes in $Y_2$ should be multiplied by three (see Figure \ref{fig:algorithm_step_c}).

\paragraph{Extension of \Cref{alg:dual}.} In addition, we remark that our algorithm could be extended to a larger class of parametric Markov chains, namely, to all Markov chains $(G,P^{(\eps)})$, where $G$ is a graph in which every node has a path to some sink node, and, for every $e \in E(G)$, $P_e^{(\eps)}$ is some rational fraction in $\eps$, i.e., $\frac{f_e(\eps)}{g_e(\eps)}$, where both $f_e$ and $g_e$ are polynomials in $\eps$ with positive coefficients.\footnote{This class is reminiscent of a class of parametric Markov chains studied by \cite{HHZ11a}.} Now, we can construct a cost function $c$ on $G$, by setting $c(e) = x_e - z_e + 1$, where $x_e$ is the smallest exponent in $f_e(\eps)$ and $z_e$ is the smallest exponent in $g_e(\eps)$. Note that, if $c(e)<1$, then the Markov chain cannot be well defined for all $\eps \in (0,1]$. Now, we run \Cref{alg:dual} for the delegation graph $(G,c)$ with only one difference, i.e., the weight function $w_X$ also has to incorporate the coefficients of the polynomials 
$f_e(\eps)$ and $g_e(\eps)$. More precisely, we define for every $e \in E$, the number $q_e$ as the ratio between the coefficient corresponding to the smallest exponent in $f_e$ and the coefficient corresponding to the smallest exponent in $g_e$. Then, we redefine line 4 in the algorithm to be \[w_X(Y,Y') \leftarrow \sum_{(u,v) \in E_y \cap (Y \times Y')} t_Y(u) \cdot q_{(u,v)}.\] One can then verify with the same techniques as in \Cref{sec:borda} and \Cref{sec:equivalence}, that this algorithm returns the outcome of the above defined class of Markov chains.

\section{Axiomatic Analysis}\label{sec:axioms}

In this section, we generalize and formalize the axioms mentioned in \Cref{sec:introduction} and show that the \randwalk (and hence \mixborda) satisfies all of them. In particular, our version of confluence (copy-robustness, respectively) reduces to (is stronger than, respectively) the corresponding axiom for the non-fractional case by \cite{BDG+22a} (see \Cref{app:axioms-relation}). We first define \emph{anonymity}, which prescribes that a delegation rule should not make decisions based on the identities of the voters.
Given a digraph with a cost function $(G,c)$ and a bijection $\sigma: V(G) \rightarrow V(G)$, we define the graph $\sigma((G,c))$ as $(G',c')$, where $V(G') = V(G)$, $E(G') = \{ (\sigma(u), \sigma(v) \mid (u,v) \in E(G) \}$ and $c'(\sigma(u),\sigma(v)) = c(u,v)$ for each edge $(u,v) \in   E(G)$.

\medskip
\textbf{Anonymity:} 
For any delegation graph $(G,c)$, any bijection $\sigma: V(G) \rightarrow V(G)$, and any  $v \in N, s \in S$, it holds that $A_{v,s} = A'_{\sigma(v),\sigma(s)}$, where $A$ and $A'$ are the outputs of the delegation rule for $(G,c)$ and $\sigma((G,c))$, respectively.

\begin{restatable}{theorem}{anonymity}\label{thm:anonymity}
    The \randwalk satisfies anonymity.
\end{restatable}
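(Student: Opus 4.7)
The plan is to exploit the fact that the Random Walk Rule is defined purely in terms of the transition matrix $P^{(\eps)}$ of the Markov chain $(G, P^{(\eps)})$, and that $P^{(\eps)}_{u,v}$ depends only on $c(u,v)$ and the local normalization $\bar{\eps}_u$ which itself depends only on costs of edges leaving $u$. The main observation is that a bijection $\sigma$ of the vertices induces a graph isomorphism between $(G,c)$ and $\sigma((G,c))$ that preserves all edge costs, and such an isomorphism should carry absorbing probabilities across in the obvious way.

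Concretely, I would first fix $\eps \in (0,1]$ and let $(G',c') = \sigma((G,c))$ with corresponding transition matrix $P'^{(\eps)}$. Using the definition of $c'$ and the fact that $\sigma$ restricts to a bijection between $\delta^+(u)$ and $\delta^+(\sigma(u))$ (since edges are relabeled consistently), I would verify
\[
\bar{\eps}'_{\sigma(u)} = \sum_{(\sigma(u),\sigma(v)) \in \delta^+(\sigma(u))} \eps^{c'(\sigma(u),\sigma(v))} = \sum_{(u,v) \in \delta^+(u)} \eps^{c(u,v)} = \bar{\eps}_u,
\]
and therefore $P'^{(\eps)}_{\sigma(u),\sigma(v)} = P^{(\eps)}_{u,v}$ for all $(u,v) \in E(G)$. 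Since $\sigma$ also maps the sink set $S$ of $G$ bijectively onto the sink set $S'$ of $G'$, the two Markov chains are isomorphic as absorbing chains via $\sigma$.

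Next, I would translate this isomorphism into an identity on the limiting matrix $Q$. The cleanest way is to write $P'^{(\eps)} = \Pi P^{(\eps)} \Pi^{\T}$, where $\Pi$ is the permutation matrix of $\sigma$, and observe that this relation is preserved under powers and Cesàro averages:
\[
Q'^{(\eps)} = \lim_{\tau \to \infty} \frac{1}{\tau}\sum_{i=0}^{\tau} (P'^{(\eps)})^i = \Pi \Big(\lim_{\tau \to \infty} \frac{1}{\tau}\sum_{i=0}^{\tau} (P^{(\eps)})^i\Big) \Pi^{\T} = \Pi Q^{(\eps)} \Pi^{\T}.
\]
Reading off entry $(\sigma(v),\sigma(s))$ yields $Q'^{(\eps)}_{\sigma(v),\sigma(s)} = Q^{(\eps)}_{v,s}$ for every $v \in N$ and $s \in S$.

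Finally, I would take $\eps \to 0$ on both sides; since the equality holds for every $\eps \in (0,1]$, the limits agree and so $A'_{\sigma(v),\sigma(s)} = A_{v,s}$, which is exactly anonymity. The argument is essentially a symmetry/naturality check, so I do not anticipate any real obstacle; the only point requiring care is to verify that the permutation action commutes with the Cesàro-limit construction, which is immediate because matrix multiplication by $\Pi$ and $\Pi^{\T}$ is continuous and linear.
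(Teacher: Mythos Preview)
Your proposal is correct and follows essentially the same approach as the paper: both arguments first verify that the bijection $\sigma$ preserves transition probabilities, i.e., $P'^{(\eps)}_{\sigma(u),\sigma(v)} = P^{(\eps)}_{u,v}$, and then deduce that the absorbing probabilities coincide. The only cosmetic difference is that the paper phrases the second step as a bijection between walks $\W[v,s] \leftrightarrow \W[\sigma(v),\sigma(s)]$ preserving walk probabilities, whereas you package the same fact as conjugation by a permutation matrix commuting with powers and the Ces\`aro limit.
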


\begin{proof_directly}
    Given a delegation graph $(G,c)$ and a bijection $\sigma: V(G) \rightarrow V(G)$, we know that for all $v \in V(G)$ it holds that $|\delta_G^+(v)| =  |\delta_{G'}^+(\sigma(v))|$ and $c(v,w) = c'(\sigma(v), \sigma(w))$ for any edge $(v,w) \in \delta^+(v)$, where $(G',c') = \sigma((G,c))$. In the corresponding Markov chains $M_\eps$ and $M_\eps'$ we therefore get $P^{(\eps)}_{(v,w)} = P'^{(\eps)}_{(\sigma(v), \sigma(w))}$ (see Equation \ref{eq:rw_transition_probabilities}). Since through the bijection between the edges of the graph, we also get a bijection between all walks in the graph $\W$ and for every $s \in S$ and walk in $\mathcal{W}[s,v]$ there is a corresponding walk in $\W[\sigma(v),\sigma(s)]$ of the same probability. Therefore we have
    $$
        A_{v,s}
        = \lim_{\eps \rightarrow 0} \sum_{W \in \W[v,s]} \prod_{e \in W} P^{(\eps)}_e
        = \lim_{\eps \rightarrow 0} \sum_{W \in \W[\sigma(v),\sigma(s)]} \prod_{e \in W} P'^{(\eps)}_e
        = A'_{\sigma(v),\sigma(s)} \quad,
    $$
    which concludes the proof.
\end{proof_directly}

We now define \emph{copy-robustness}, which intuitively demands that if a delegating voter $v \in N$ decides to cast their vote instead, the total voting weight of $v$ and all its representatives should not change. This lowers the threat of manipulation by a voter deciding whether to cast or delegate their vote depending on which gives them and their representatives more total voting weight. This axiom was introduced (under a different name) by \cite{BeSw15a} and defined for non-fractional delegation rules in \citep{BDG+22a}. We strengthen\footnote{\cite{BDG+22a} restrict the condition to voters that have a direct connection to their representative.} and generalize the version of \cite{BDG+22a}.

\medskip
\textbf{Copy-robustness:}
   For every delegation graph $(G,c)$ and delegating voter $v \in N$, the following holds: Let $(\hat{G},c)$ be the graph derived from $(G,c)$ by removing all outgoing edges of $v$, let $A$ and $\hat{A}$ be the output of the delegation rule for $(G,c)$ and $(\hat{G},c)$, respectively and let $S_v = \{ s \in S \mid A_{v,s} > 0 \}$ be the set of representatives of $v$ in $(G,c)$. Then $\sum_{s \in S_v} \pi_s(A) = \pi_v(\hat{A}) + \sum_{s \in S_v} \pi_s(\hat{A})$.

\begin{restatable}{theorem}{copyrobust}\label{thm:copy_robustness}
    The \randwalk satisfies copy-robustness.
\end{restatable}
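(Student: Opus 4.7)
My plan is to exploit the absorbing Markov chain interpretation of the \randwalk together with the strong Markov property applied at the first hitting time of $v$. For any $\varepsilon \in (0,1]$, let $q^{(\varepsilon)}_{u,s}$ denote the probability that a random walk in $(G,P^{(\varepsilon)})$ starting at $u \in N$ is absorbed at $s \in S$, and let $\hat{q}^{(\varepsilon)}_{u,s}$ and $\hat{q}^{(\varepsilon)}_{u,v}$ denote the analogous absorbing probabilities in $(\hat{G},\hat{P}^{(\varepsilon)})$, whose absorbing states are now $S \cup \{v\}$. By the definition of the \randwalk, these quantities converge to $A_{u,s}$, $\hat{A}_{u,s}$, and $\hat{A}_{u,v}$ as $\varepsilon \to 0$.

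The core identity I would establish is
\[
  q^{(\varepsilon)}_{u,s} \;=\; \hat{q}^{(\varepsilon)}_{u,s} + \hat{q}^{(\varepsilon)}_{u,v}\cdot q^{(\varepsilon)}_{v,s} \qquad \text{for all } u \in N \setminus \{v\},\ s \in S,\ \varepsilon \in (0,1].
\]
Because the outgoing transition probabilities at every node other than $v$ agree in $G$ and $\hat{G}$, a walk from $u$ can be coupled in the two chains up to the first time it (possibly) hits $v$. If the walk never reaches $v$, it is absorbed in the same state $s \in S$ in both chains, accounting for $\hat{q}^{(\varepsilon)}_{u,s}$. If it does reach $v$, then in $\hat{G}$ it is absorbed at $v$ (with probability $\hat{q}^{(\varepsilon)}_{u,v}$), while in $G$ the strong Markov property at the first hitting time of $v$ says the remainder is distributed as a walk started at $v$, contributing a factor $q^{(\varepsilon)}_{v,s}$.

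Passing to the limit $\varepsilon \to 0$ in this identity and summing over $s \in S_v$ yields, for each $u \in N \setminus \{v\}$,
\[
  A_{u,S_v} \;=\; \hat{A}_{u,S_v} + \hat{A}_{u,v}\cdot A_{v,S_v} \;=\; \hat{A}_{u,S_v} + \hat{A}_{u,v},
\]
where $A_{u,S_v}:=\sum_{s \in S_v} A_{u,s}$ and I use that $A_{v,S_v} = 1$ because $S_v$ collects, by definition, every sink on which the $v$-row of $A$ (a probability distribution over $S$) places positive mass. Summing this per-$u$ identity over $N \setminus \{v\}$, unfolding the definition of $\pi$ on both sides, and noting that the unit contribution $A_{v,S_v}=1$ coming from $v$'s row on the left matches the constant term of $\pi_v(\hat{A})$ on the right while the $|S_v|$ unit terms from the two $\pi_s$ sums cancel, gives the desired equality.

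The main conceptual step is the coupling identity above; its probabilistic content is a routine first-hitting-time decomposition, but a minor subtlety is that the walk in $G$ may revisit $v$ many times, so one must carefully condition on the \emph{first} hitting time. Everything else is straightforward bookkeeping of the unit summands in $\pi$ together with the observation $A_{v,S_v}=1$. A final sanity check is that $\hat{G}$ still satisfies the model's reachability assumption, which follows because any path in $G$ from $u \neq v$ to $S$ either avoids $v$ (and survives in $\hat{G}$) or reaches $v$ (which is now a sink of $\hat{G}$).
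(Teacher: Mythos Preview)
Your argument is correct and takes a genuinely different route from the paper's proof. The paper relies on the equivalence with \mixborda and the structural output of Fulkerson's algorithm: it first shows that $S_v$ equals the set of sinks reachable from $v$ in the tight-edge graph $G_y$, then proves that removing $v$'s outgoing edges deletes from $\F$ exactly the sets containing $v$, and uses this to argue that for every $s\in S\setminus S_v$ the voting weight is unchanged, i.e.\ $\pi_s(A)=\pi_s(\hat A)$; the desired equality then follows by conservation of total weight. Your proof bypasses all of this machinery: the first-hitting decomposition $q^{(\varepsilon)}_{u,s}=\hat q^{(\varepsilon)}_{u,s}+\hat q^{(\varepsilon)}_{u,v}\,q^{(\varepsilon)}_{v,s}$ is a standard strong-Markov identity that holds because the transition probabilities out of every node other than $v$ are identical in $G$ and $\hat G$, and taking $\varepsilon\to 0$ is unproblematic since all three limits exist by definition of the rule. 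Your approach is more elementary (it needs neither \Cref{thm:equivalence} nor the laminar family $\F$) and in fact yields more: specializing the limiting identity to $s\notin S_v$ gives $A_{u,s}=\hat A_{u,s}$ for every $u$, which is precisely the paper's key intermediate claim. What the paper's approach buys instead is structural insight into \emph{why} the weights outside $S_v$ are preserved, phrased in the combinatorial language of tight edges and dual certificates that is reused elsewhere (e.g.\ for guru participation).
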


\begin{proof_directly}
    Let $(G,c)$, $v$, $(\hat{G},c)$, $A$, $\hat{A}$ and $S_v$ be defined as in the definition of copy-robustness. Let $(\F, y)$ and $(\hat{\F}, \hat{y})$ be the set families and functions returned by Algorithm \ref{alg:cert} for $G$ and $\hat{G}$, respectively.
    In this proof, we restrict our view to the subgraphs of only tight edges, denoted by $G_y=(N \cup S,E_y)$ and $\hat{G}_{\hat{y}}=(N\setminus\{v\} \cup V \cup \{v\},E_{\hat{y}})$, respectively. Note, that this does not change the result of the \randwalk, since it is shown to be equal to \mixborda, which only considers tight edges (in the contracted graph) itself.  
    
    First, we observe that the set $S_v$ is exactly the subset of $S$ reachable by $v$ in $G_y$. This is because the assignment $A$ returned by the \randwalk is given as the absorbing probability of a Markov chain on the graph $(G_X, w_X)$ with $X = N \cup S$, computed by Algorithm \ref{alg:dual}. The graph is constructed from $G_y$ by a number of contractions, which do not alter reachability, i.e. for $s \in S$ the node $\{s\}$ is reachable from the node $Y_v$ containing $v$ in $G_X$ exactly if $s$ is reachable from $v$ in $G_y$. Since all edge weights $w_X$ are strictly positive, in the corresponding Markov chain all transition probabilities on the edges of $G_X$ are strictly positive as well. This gives $\{s\}$ a strictly positive absorbing probability when starting a random walk in $Y_v$ exactly if $s$ is reachable from $v$ in $G_y$.
    
    Our next observation is that $\hat{\F} = \F \setminus \{ Y \in \F \mid v \in Y \} \cup \{\{v\}\}$, $\hat{y}(\{v\}) = 1$ and $y(Y) = \hat{y}(Y)$ for all $Y \in \hat{\F} \setminus \{\{v\}\}$.
    Consider the computation of $\F$ in Algorithm \ref{alg:cert}. Since the output is unique (see \Cref{lem:certficates} statement (i)), we can assume without loss of generality that after initializing $\F$, all sets in $\{ Y \in \F \mid v \notin Y \}$ are added to $\F$ first and then the remaining sets $\{ Y \in \F\mid v \in Y \}$. In $\hat{G}$, the only edges missing are the outgoing edges from $v$, therefore, when applying Algorithm \ref{alg:cert} to $\hat{G}$ all sets in $\{Y \in \F \mid v \notin Y \}$ can be added to $\hat{\F}$ first (with $\hat{y}(Y) = y(Y)$). Note, that the set $\{v\}$ with $y(\{v\}) = 1$ was added to $\hat{\F}$ in the initialization. We claim, that the algorithm terminates at that point. Suppose not, then there must be another strongly connected component $X \subseteq N$ with $\delta^+(X) \cap E_{\hat{y}} = \emptyset$. If $v \in X$ then since $v$ has no outgoing edges $X = \{v\}$, which is already in $\F$. If $v \notin X$ then $X$ would have already been added.

    With these two observations, we can show the following claim: For every casting voter $s \in S \setminus S_v$ the voting weight remains equal, when $v$ turns into a casting voter, i.e., $\pi_s(A) = \pi_s(\hat{A})$.
    Fix $s \in S \setminus S_v$ and let $U \subset N$ be the set of nodes not reachable from $v$ in $G_y$. We know that $\hat{\F} = \F \setminus \{ Y \in \F \mid v \in Y \} \cup \{v\}$, which implies that for every node $u \in U$ the sets containing $u$ are equal in $\F$ and $\hat{\F}$ , i.e., $\{ Y \in \F \mid u \in Y \} = \{ Y \in \mathcal{\hat{F}} \mid u \in Y \}$. Therefore, the outgoing edges from any $u \in U$ are equal in $G_y$ and $\hat{G}_{\hat{y}}$. Since $\hat{\F} \subseteq \F$, the edges in $\hat{G}_{\hat{y}}$ are a subset of the edges in $G_y$ and therefore the set $U$ is not reachable from $v$ in $\hat{G}_{\hat{y}}$. 
    When translating $\hat{G}_{\hat{y}}$ into the Markov chain $(\hat{G}_{\hat{y}},\hat{P}^{(\eps)})$ (see Equation \ref{eq:rw_transition_probabilities}), we get for the probability of any tight out-edge $e$ of $u$ and any $\eps > 0$, that $P^{(\eps)}_e = {\hat{P}^{(\eps)}_e}$, where $P^{(\eps)}$ is the transition matrix induced by the original graph $G_y$. 
    In the following we argue about the set of walks in $G_y$ and $G_{\hat{y}}$. To this end we define for every $u \in N$, the set $\mathcal{W}[u,s]$ ($\hat{\mathcal{W}}[u,s]$, respectively) as the set of walks in $G_y$ (in $G_{\hat{y}}$, respectively) that start in $u$ and end in sink $s$. 
    Since all walks from any $u \in U$ to $s$ contain only outgoing edges from nodes in $U$, we have $\hat{\W}[u,s] = \W[u,s]$. For any other voter $w \in N \setminus U$ we have $\hat{\W}[w,s] = \W[w,s] = \emptyset$ and therefore
    $$
        \pi_{s}(\hat{A}) 
        = 1+ \sum_{u \in U} \lim_{\eps \rightarrow 0} \sum_{\hat{W} \in \hat{\W}[u,s]} \prod_{e \in \hat{W}} P^{(\eps)}_e
        = 1+ \sum_{u \in U} \lim_{\eps \rightarrow 0} \sum_{W \in \W[u,s]} \prod_{e \in W} P^{(\eps)}_e
        = \pi_{s}(A) \quad,
    $$
    which concludes the proof of the claim.
    
    Summarizing, we know that that for any casting voter $s \in S \setminus S_v$ we have $\pi_s(A) = \pi_s(\hat{A})$, which directly implies that $\sum_{s \in S_v} \pi_s(A) = \pi_v(\hat{A}) + \sum_{s \in S_v} \pi_s(\hat{A})$.
\end{proof_directly}

To capture the requirement that the voting weight of different voters is assigned to casting voters in a ``consistent'' way, \cite{BDG+22a} define confluence as follows:
A delegation rule selects, for every voter $u \in N$, one walk in the delegation graph starting in $u$ and ending in some sink $s \in S$, and assigns voter $u$ to casting voter $s$. A delegation rule satisfies confluence, if, as soon as the walk of $u$ meets some other voter $v$, the remaining subwalk of $u$ equals the walk of $v$.\footnote{\cite{BDG+22a} assume paths instead of walks. The two definitions are equivalent (see \Cref{app:axioms-relation}). 
} Below, we provide a natural generalization of the property by allowing a delegation rule to specify a probability distribution over walks. Then, conditioned on the fact that the realized walk of some voter $u$ meets voter $v$, the probability that $u$ reaches some sink $s \in S$ should equal the probability that $v$ reaches $s$.

\medskip
\textbf{Confluence:} For every delegation graph $(G,c)$, there exists a probability distribution $f_v$ for all $v \in N$ over the set of walks in $G$ that start in $v$ and end in some sink, which is consistent with the assignment $A$ of the delegation rule (i.e., $\prob_{W \sim f_v}[s \in W] = A_{v,s}$ for all $v\in N,s \in S$), and,
$$
    \prob_{W \sim f_u}[s \in W \mid v \in W] = \prob_{W \sim f_v}[s \in W] \quad \text{ for all } u,v \in N, s \in S.
$$

Note that the requirement that $A_{v,s} = \prob_{W \sim f_v}[s \in W]$ implies that for any $v \in V$ and $s \in S$ we can have $A_{v,s} > 0$ only if there is a path from $v$ to $s$ in $G$. 

\begin{restatable}{theorem}{confluence}\label{thm:confluence}
    The \randwalk satisfies confluence.
\end{restatable}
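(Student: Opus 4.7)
The plan is to leverage the strong Markov property of the parameterized Markov chain $(G, P^{(\varepsilon)})$ for each fixed $\varepsilon > 0$ and then pass to the limit $\varepsilon \to 0$. For each $\varepsilon \in (0,1]$, let $f_u^{(\varepsilon)}$ denote the distribution on finite walks from $u$ to absorbing states induced by the random walk on $(G, P^{(\varepsilon)})$. Applying the strong Markov property at the first hitting time of $v$ gives, for every $u,v \in N$ and $s \in S$ with $\mathbb{P}^{(\varepsilon)}_u[v \in W] > 0$,
\[
\mathbb{P}_{W \sim f_u^{(\varepsilon)}}[s \in W \mid v \in W] = A^{(\varepsilon)}_{v,s},
\]
so each $f_u^{(\varepsilon)}$ already witnesses a confluence-like identity---but with the $\varepsilon$-dependent absorption probability in place of the limit $A_{v,s}$.

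The main obstacle is that one cannot simply take $f_u$ to be the pointwise limit of the $f_u^{(\varepsilon)}$: as $\varepsilon \to 0$ the random walk oscillates arbitrarily long inside tight strongly connected components, so any individual walk's probability may tend to zero and the mass ``escapes'' to longer and longer walks. I will work around this by tracking only the coarser \emph{visit pattern} $\omega \in \{0,1\}^{N \cup S}$, which records which nodes a walk passes through and at which sink it terminates. There are only finitely many patterns, and each $\mathbb{P}^{(\varepsilon)}_u[\omega]$ is a rational function of $\varepsilon$ bounded in $[0,1]$ (since absorption-type probabilities in a finite Markov chain with rational transitions are rational in the parameter), so the limits $p^u_\omega := \lim_{\varepsilon \to 0} \mathbb{P}^{(\varepsilon)}_u[\omega]$ exist and sum to $1$. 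For each pattern $\omega$ with $p^u_\omega > 0$ I pick an arbitrary witness walk $W^u_\omega$ realizing $\omega$ (one exists because $p^u_\omega > 0$ forces $\mathbb{P}^{(\varepsilon)}_u[\omega] > 0$ for some $\varepsilon > 0$) and assign it mass $p^u_\omega$, yielding a valid probability distribution $f_u$ on walks from $u$ to sinks.

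Finally I verify the two required conditions. The consistency marginal $\mathbb{P}_{W \sim f_u}[s \in W] = A_{u,s}$ follows by summing $p^u_\omega$ over patterns with $\omega(s)=1$ and identifying the sum as $\lim_{\varepsilon \to 0} \mathbb{P}^{(\varepsilon)}_u[s \text{ is the endpoint}]$, which equals $A_{u,s}$ by the definition of the \randwalk. For confluence I multiply the displayed strong-Markov identity by $\mathbb{P}^{(\varepsilon)}_u[v \in W]$ to obtain $\mathbb{P}^{(\varepsilon)}_u[s \in W,\, v \in W] = \mathbb{P}^{(\varepsilon)}_u[v \in W] \cdot A^{(\varepsilon)}_{v,s}$; each factor is a rational function of $\varepsilon$ with a finite limit in $[0,1]$, so passing to $\varepsilon \to 0$ gives $\mathbb{P}_{f_u}[s \in W,\, v \in W] = \mathbb{P}_{f_u}[v \in W] \cdot A_{v,s}$. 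Whenever $\mathbb{P}_{f_u}[v \in W] > 0$ this rearranges to $\mathbb{P}_{f_u}[s \in W \mid v \in W] = A_{v,s} = \mathbb{P}_{W \sim f_v}[s \in W]$, as required; the remaining case is vacuous.
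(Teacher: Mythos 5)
Your proof is correct, but it takes a genuinely different route from the paper. The paper builds the witnessing distributions $f_v$ explicitly out of the machinery of \Cref{alg:dual}: it passes to the contracted graph $G_X$ induced by the laminar family $\F$, defines an injective expansion map $\gamma_v$ from walks in $G_X$ to walks in $G$, pushes forward the (self-loop-free) Markov chain on $G_X$ along $\gamma_v$, and verifies both conditions by factoring edge-products of contracted walks at $Y_v$. You instead stay entirely in the original parametric chain $(G,P^{(\eps)})$: the strong Markov property gives the confluence identity exactly at each fixed $\eps$, and the only real work is transporting it to the limit. Your key idea --- collapsing walks to finitely many visit patterns $\omega$, so that each relevant probability is a bounded rational function of $\eps$ with a well-defined limit, and then planting the limiting mass on arbitrary witness walks --- neatly sidesteps the non-existence of a pointwise limit of the walk distributions, which is precisely the obstacle the paper's $\gamma_v$ construction is designed to handle. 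What each approach buys: the paper's proof yields concrete, structurally meaningful distributions (supported on walks that traverse the tight components of $\F$) and reuses \Cref{lem:absorbing_self_loops} and the $G_X$ analysis already needed for \Cref{thm:alg-borda}, whereas yours is self-contained, needs neither the algorithm nor the Markov chain tree theorem, and would generalize to other parametric chains whose transition probabilities are rational in $\eps$. One small point to tighten: the probability of an \emph{exact} visit pattern is not itself a single absorption probability, so your parenthetical justification of rationality is slightly too quick; it does hold, e.g.\ by inclusion--exclusion over avoidance events, each of which is an absorption probability of a restricted chain and hence a rational function of $\eps$. With that line added, the argument is complete.
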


\begin{proof_directly}
    Before proving the claim, we introduce notation. For any walk $W$ in some graph $G$, and some node $v \in V(G)$, we define $W[v]$ to be the subwalk of $W$ that starts at the first occasion of $v$ in $W$. For two nodes $u,v \in V(G)$, we define $W[u,v]$ to be the subwalk of $W$ that starts at the first occasion of $u$ and ends at the first occasion of $v$. (Note that $W[v]$ and $W[u,v]$ might be empty.) Now, for a set of walks $\W$ and $u,v,s \in V(G)$, we define $\W[v] = \{W[v] \mid W \in \W\}$ and $\W[u,v] = \{W[u,v] \mid W \in \W\}$. Lastly, we define $\W[u,v,s] = \{W \in W[u,s] \mid v \in W[u,s]\}$. 
    We usually interpret a walk $W$ as a sequence of nodes. In order to facilitate notation, we abuse notation and write $v \in W$ for some node $v \in V(G)$ in order to indicate that $v$ appears in $W$, and for an edge $e \in E(G)$, we write $e \in W$ to indicate that tail and head of $e$ appear consecutively in $W$.
    
    For the remainder of the proof we fix $\W$ to be the set of walks in the input delegation graph $G$ starting in some node from $N$ and ending in some sink node $S$.  
    Moreover, let $G_X$ be the graph at the end of \Cref{alg:dual}, i.e., $G_X$ for $X = N \cup S$. We fix $\Wh$ to be the set of walks which start in some node of $G_X$ and end in some sink node of $G_X$ (which are exactly the nodes in $\{\{s\} \mid s \in S\}$). 
        
    In the following, we define for every $v \in N$ a probability distribution $f_{v}: \mathcal{W}[v] \rightarrow [0,1]$, such that it witnesses the fact that the \randwalk is confluent.  To this end, we define a mapping $\gamma_v: \Wh[Y_v] \rightarrow \W[v]$, where $Y_v$ is the node in $G_X$ that contains $v$. Given a walk $\hat{W} \in \Wh[Y_v]$, we construct $\gamma_v(\hat{W}) \in \W[v]$ as follows: 
    Let $\hat{W} = Y^{(1)}, \dots Y^{(k)}$. By construction of $G_X$ we know that for every $i \in \{1,\dots,k\}$, the fact that $(Y^{(i)},Y^{(i+1)}) \in E_X$ implies that there exists $(b^{(i)},a^{(i+1)}) \in E$ with $b^{(i)} \in Y^{(i)}$ and $a^{(i+1)} \in Y^{(i+1)}$. Moreover, we define $a^{(1)}=v$ and $b^{(n)} = s$, where $\{s\} = Y^{(k)}$. Under this construction it holds that $a^{(i)}, b^{(i)} \in Y^{(i)}$ for all $i \in \{1,\dots,k\}$, but the two nodes may differ. Therefore, we insert subwalks $W^{(i)}$ connecting $a^{(i)}$ to $b^{(i)}$ by using only nodes in $Y^{(i)}$ and visiting each of these nodes at least once. The final walk $\gamma_v(\hat{W})$ is then defined by $(a^{(1)},W^{(1)},b^{(1)}, \dots, a^{(n)},W^{(n)},b^{(n)})$. We remark that this mapping is injective, and it holds that $\hat{W}$ visits some node $Y \in V(G_X)$ if and only if $\gamma_v(\hat{W})$ visits all nodes in $Y$.

    Recall that the assignment $A$ of the \randwalk can be computed via a Markov chain $(G_X',P)$ derived from the contracted graph $(G_X, w_X)$ (see \Cref{sec:borda} and \Cref{sec:equivalence}), where $G'_X$ is derived from $G_X$ by adding self-loops. In \Cref{lem:absorbing_self_loops} we show that introducing (and thus removing) self-loops to states in an absorbing Markov chain does not change its absorbing probabilities. We retrieve the Markov chain $(G_X, \hat{P})$ by removing all self loops of all voters in $N$ and rescaling the other probabilities accordingly. We then make use of this Markov chain in order to define $f_v$ over $\W[v]$. That is, for any $W \in \W[v]$ let $$f_v(W) = \begin{cases} \prod_{e \in \hat{W}} \hat{P}_e & \text{if there exists } \hat{W} \in \hat{\mathcal{W}}[Y_v] \text{ such that } \gamma_v(\hat{W}) = W \\ 0 & \text{ else.}\end{cases}$$ Note that, the above expression is well defined since $\gamma_v$ is injective.

    In the remainder of the proof, we show that $f_v$ witnesses the confluence of the \randwalk. First, we show that $f_v$ is indeed consistent with the assignment $A$ returned by \randwalk. That is, for any $v \in N$ and $s \in S$ it holds that $$\prob_{W \sim f_v}[s \in W] = \sum_{W \in \mathcal{W}[v,s]} f_v(W) = \sum_{\hat{W} \in \hat{\mathcal{W}}[Y_v,\{s\}]}\prod_{e \in \hat{W}} \hat{P}_e = A_{v,s} \quad.$$ 
    
    The second equality comes from the fact that $\gamma_v$ is injective and exactly those walks in $\Wh[Y_v,\{s\}]$ are mapped by $\gamma_v$ to walks in $\W[v,s]$. Moreover, all walks in $\W[v,s]$ that have no preimage in $\Wh[Y_v,\{s\}]$ are zero-valued by $f_v$. The last equality comes from the fact that $A_{v,s}$ equals the probability that the Markov chain $(G_X',P)$ (equivalently, $(G_X,\hat{P})$) reaches $\{s\}$ if started in $Y_v$ (see \Cref{sec:borda} and \Cref{sec:equivalence}). 

    We now turn to the second condition on the family of probability distributions $f_v, v \in N$. That is, for every $u, v \in N, s \in S$ it holds that
    \begin{align*}
        \prob_{W \sim f_u}[v \in W \wedge s \in W] 
        &= \sum_{W \in \mathcal{W}[u,v,s]} f_u(W)  
         = \sum_{\hat{W} \in \hat{\mathcal{W}}[Y_u,Y_v,\{s\}]} \,\, \prod_{e \in \hat{W}} \hat{P}_e \\ 
        &= \sum_{\hat{W} \in \hat{\mathcal{W}}[Y_u,Y_v,\{s\}]}  \big(\prod_{e \in \hat{W}[Y_u,Y_v]} \hat{P}_e \big) \big( \prod_{e \in \hat{W}[Y_v,\{s\}]} \hat{P}_e \big) \\ 
        &= \big( \sum_{\hat{W} \in \Wh[Y_u,Y_v]} \,\, \prod_{e \in \hat{W}} \hat{P}_e\big) \cdot \big( \sum_{\hat{W} \in \Wh[Y_v,\{s\}]} \,\, \prod_{e \in \hat{W}} \hat{P}_e\big) \\
        &= \big( \sum_{s' \in S} \,\, \sum_{\hat{W} \in \Wh[Y_u,Y_v,\{s'\}]} \,\, \prod_{e \in \hat{W}} \hat{P}_e\big) \cdot \big( \sum_{\hat{W} \in \Wh[Y_v,\{s\}]} \,\, \prod_{e \in \hat{W}} \hat{P}_e\big) \\
        &= \big( \sum_{s' \in S} \,\, \sum_{W \in \W[u,v,s']} f_u(W) \cdot \big( \sum_{W \in \W[v,s]} f_v(W)) \\
        & = \prob_{W \sim f_u}[v \in W] \cdot \prob_{W \sim f_v}[s \in W].
    \end{align*}
    The second equality follows from the same reason as above, i.e., $\gamma_v$ is injective, exactly those walks in $\Wh[Y_u,Y_v,\{s\}]$ are mapped by $\gamma_v$ to walks in $\W[u,v,s]$, and all walks in $\W[u,v,s]$ that have no preimage in $\Wh[Y_u,Y_v,\{s\}]$ are zero-valued by $f_v$.
    The third inequality holds by the fact that every walk that is considered in the sum can be partitioned into $\hat{W}[Y_u,Y_v]$ and $\hat{W}[Y_v,\{s\}]$. 
    The fourth equality follows from factoring out by the subwalks. The fifth equality follows from the fact that every walk in $\Wh$ reaches some sink node eventually, and therefore, the additional factor in the first bracket sums up to one. Lastly, the sixth equality follows from the very same argument as before. 

    From the above equation we get in particular that for every $u,v \in N, s \in S$ it holds that $$\prob_{W \sim f_u}[s \in W \mid v \in W] = \frac{\prob_{W \sim f_u}[s \in W \wedge v \in W]}{\prob_{W \sim f_u}[v \in W]} = \prob_{W \sim f_v}[s \in W].$$
    This concludes the proof. 
\end{proof_directly}

With the formal definition of confluence, anonymity, and copy-robustness we can now show that these properties altogether are impossible to achieve in the non-fractional case. Recall, that a non-fractional delegation rule is defined as a delegation rule, that returns assignments $A \in \{0,1\}^{N \times S}$.

\begin{theorem}\label{thm:impossibility}
    No non-fractional delegation rule satisfies confluence, anonymity, and copy-robustness.
\end{theorem}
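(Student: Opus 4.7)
My plan is to prove the theorem by exhibiting a single delegation graph on which no non-fractional rule can satisfy anonymity and confluence simultaneously; since the theorem only requires the three axioms to be jointly unsatisfiable, copy-robustness need not be invoked.

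I would take $(G,c)$ with $N=\{v\}$, $S=\{s_1,s_2\}$, and two edges $(v,s_1),(v,s_2)$, each of cost~$1$. The permutation $\sigma$ that fixes $v$ and swaps $s_1 \leftrightarrow s_2$ is an automorphism of $(G,c)$, so anonymity forces $A_{v,s_1}=A_{v,s_2}$. For confluence, any witnessing $f_v$ is a probability distribution on the two walks $(v,s_1)$ and $(v,s_2)$; since each such walk ends at a unique sink,
\[
A_{v,s_1}+A_{v,s_2}=\sum_{s\in S}\prob_{W\sim f_v}[s\in W]=1.
\]
Combining the two constraints yields $A_{v,s_1}=A_{v,s_2}=1/2$, incompatible with $A\in\{0,1\}^{N\times S}$.

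The key conceptual step is the observation that confluence, through its witnessing probability distribution, forces every row of $A$ to sum to $1$; anonymity together with the swap automorphism then equates the two entries in the relevant row, and non-fractionality does the rest. The one point requiring care is whether the model admits two equal-cost out-edges at a single voter; since $c:E\to\mathbb{N}$ is imposed without further restriction, this is legal. If one instead insisted on strict per-voter rankings, a fall-back plan would start from \Cref{fig:simple}: anonymity via $(v_1\,v_2)(s_1\,s_2)$ together with confluence forces $A_{v_1,s_1}=A_{v_2,s_2}=1$, because the crossed assignment fails confluence (the only walk from $v_1$ to $s_2$ passes through $v_2$, whose assigned sink would then have to be $s_2$ rather than $s_1$). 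One would then combine this with a copy-robustness step on the graph obtained by deleting some voter's outgoing edges to derive the contradiction; the main obstacle in that variant is engineering the surrounding structure so that the copy-robustness-induced requirement genuinely conflicts with the anonymity-forced assignment, rather than merely being consistent with it as happens for \Cref{fig:simple} on its own.
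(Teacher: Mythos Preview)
Your primary argument is formally valid under the paper's literal definition of a cost function $c:E\to\mathbb{N}$, and it is a genuinely different (and shorter) route: it dispenses with copy-robustness entirely by exploiting a tie at~$v$. However, this shortcut comes at a price. It proves the stronger statement that no non-fractional rule is simultaneously anonymous and confluent, which directly contradicts the paper's own remark in the introduction that \emph{for any pair} of the three axioms there is a non-fractional rule (from \cite{BDG+22a}) satisfying both. The resolution is that the intended model---inherited from \cite{BDG+22a}, where $c$ is called a \emph{rank} function encoding a strict preference order---forbids equal ranks among a voter's outgoing edges. Under that reading your two-sink graph is inadmissible, and the primary argument collapses. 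You flag this yourself, which is good, but the upshot is that the result you actually establish is an artefact of a modelling loophole rather than the theorem the paper means.

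Your fallback plan is exactly the right starting point, and you have correctly diagnosed the obstacle: from \Cref{fig:simple} alone, anonymity plus confluence force $A_{v_1,s_1}=A_{v_2,s_2}=1$, but making $v_2$ a casting voter and invoking copy-robustness only tells you $\pi_{s_1}(A)=\pi_{v_2}(\hat A)+\pi_{s_1}(\hat A)$, which is perfectly consistent with the natural assignment in the reduced graph. The paper closes this gap by using \emph{two} delegation graphs on the same four vertices with different (strict) cost functions: in $(G_1,c_1)$ the cycle edges are rank~1 and the exits rank~2, while in $(G_2,c_2)$ the roles are reversed. Anonymity and confluence pin down the assignment in each graph separately. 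Crucially, turning $v_2$ into a casting voter in $G_1$ and turning $v_3$ into a casting voter in $G_2$ both yield the \emph{same} residual instance $(G_3,c_3)$ with a single delegator $v_1$ having a rank-1 edge to one sink and a rank-2 edge to another. Copy-robustness applied from the $G_1$ side forces $v_1$'s weight to the rank-2 neighbour, while applied from the $G_2$ side it forces $v_1$'s weight to the rank-1 neighbour---the contradiction. The missing idea in your sketch is precisely this second cost function and the observation that two distinct ``parent'' instances collapse to a common child.
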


\begin{proof_directly}

Consider the graph $(G_1, c_1)$ in Figure \ref{fig:impossibility}.
There are four non-fractional assignments in $(G_1, c_1)$: Both $v_1$ and $v_2$ can either be assigned $v_3$ or $v_4$. Suppose a rule chooses assignment $A$ with $A_{v_1,v_4} = A_{v_2,v_3} = 1$. This rule cannot satisfy confluence, as any walk from $v_2$ to $v_3$ includes $v_1$ and confluence requires $1 = A_{v_2,v_3} = \prob_{W \sim f_{v_2}}[v_3 \in W] = \prob_{W \sim f_{v_2}}[v_3 \in W \mid v_1 \in W] = \prob_{W \sim f_{v_1}}[v_3 \in W] = A_{v_1,v_3} = 0$. Now, suppose a delegation rule chooses assignment $A$ with $A_{v_1,v_3} = A_{v_2,v_3} = 1$. We define the bijection $\sigma$ mapping $v_1$ to $v_2$, $v_2$ to $v_1$, $v_3$ to $v_4$ and $v_4$ to $v_3$. Then, $\sigma((G_1, c)) = (G_1, c)$ and thus $A'_{v_1,v_3} = A'_{v_2,v_3} = 1$ in the assignment $A'$ that the rule chooses for $\sigma((G_1, c))$. This contradicts anonymity, since $1 = A_{v_1,v_3} \neq A'_{\sigma(v_1), \sigma(v_3)} = A'_{v_2,v_4} = 0$. We can make the same argument in the case of $A_{v_1,v_4} = A_{v_2,v_4} = 1$.
For any rule satisfying anonymity and confluence the chosen assignment $A$ must therefore have $A_{v_1,v_3} = A_{v_2,v_4} = 1$. \\
The above arguments are independent of the cost function $c$, so long as we have $c(v_1,v_2) = c(v_2,v_1)$ and $c(v_1,v_3) = c(v_2,v_4)$, needed for the equality of $\sigma((G_1, c))$ and $(G_1, c)$. Thus, any rule satisfying anonymity and confluence must choose the assignment $A$ with $A_{v_1,v_2} = A_{v_3,v_4} = 1$ for $(G_2, c_2)$.\\
We modify $(G_1, c_1)$ by making $v_2$ a casting voter (as in the definition of copy-robustness) and retrieve $(G_3, c_3)$. Copy robustness requires that the assignment from $v_1$ to $v_4$ in $G_1$ (which is zero) must be the same as the sum of assignments from $v_1$ to $v_4$ and $v_2$. Thus, we have $A_{v_1,v_3} = 1$ for the assignment $A$, that any confluent, anonymous, and copy-robust rule chooses for $(G_3, c_3)$. However, we can also construct $(G_3, c_3)$ from $(G_2, c_2)$ by making $v_3$ a casting voter. Then, analogously, copy-robustness requires $A_{v_1,v_2} = 1$ for the assignment of $(G_3, c_3)$, leading to a contradiction.
\end{proof_directly}

\begin{figure}
\centering
\begin{tikzpicture}[node distance=1.0cm]
    \node (v11) [non_abs_small] at (0, 0) {$v_1$};
    \node (v21) [non_abs_small, right  = of v11]  {$v_2$};
    \node (v31) [abs_small,  left = of v11] {$v_3$};
    \node (v41) [abs_small,  right = of v21] {$v_4$};
    \draw[->] (v11) edge[del1, bend left = 25] (v21);
    \draw[->] (v21) edge[del1, bend left = 25] (v11);
    \draw[->] (v11) edge[del2] (v31);
    \draw[->] (v21) edge[del2] (v41);
    \node (G1) [left = 3mm of v31] {$(G_1, c_1):$};
    
    \node (v12) [non_abs_small] at (0, -1) {$v_1$};
    \node (v22) [non_abs_small, right  = of v12]  {$v_3$};
    \node (v32) [abs_small,  left = of v12] {$v_2$};
    \node (v42) [abs_small,  right = of v22] {$v_4$};
    \draw[->] (v12) edge[del2, bend left = 25] (v22);
    \draw[->] (v22) edge[del2, bend left = 25] (v12);
    \draw[->] (v12) edge[del1] (v32);
    \draw[->] (v22) edge[del1] (v42);
    \node (G2) [left = 3mm of v32] {$(G_2, c_2):$};
    
    \node (v13) [non_abs_small] at (6.5, -.5) {$v_1$};
    \node (v23) [abs_small, above right = .1cm and 1cm of v13] {$v_2$};
    \node (v33) [abs_small, below right = .1cm and 1cm of v13] {$v_3$};
    \node (v43) [abs_small,  right = 2.2cm of v13] {$v_4$};
    \draw[->] (v13) edge[del1] (v23);
    \draw[->] (v13) edge[del2] (v33);
    \node (G3) [left = 3mm of v13] {$(G_3, c_3):$};

\end{tikzpicture}
\caption{Situation in the proof of \Cref{thm:impossibility}. Solid edges correspond to first-choice delegations, dashed edges to second-choice delegations.}
\label{fig:impossibility}
\end{figure}
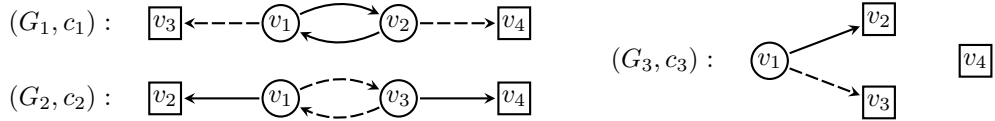

Since the \randwalk (and thus \mixborda) satisfies generalizations of the three axioms, the above impossibility is due to its restriction to non-fractional rules.

\section{Concluding Remarks} \label{sec:conclusion}
We generalized the setting of liquid democracy with ranked delegations to allow for fractional delegation rules. Beyond that, we presented a delegation rule that can be computed in polynomial time and satisfies a number of desirable properties. A natural follow-up question is to understand the entire space of delegation rules satisfying these properties.\\ 
Fractional delegations have been recently implemented (see \url{electric.vote}) and studied by \cite{Degr14a} and \cite{Bers22a}. In contrast to our setting, these approaches let agents declare a desired \emph{distribution} over their delegates (instead of rankings). We remark that one could easily combine the two approaches by letting agents declare their desired split within each equivalence class of their ranking. Our algorithm can be extended for this setting (see \Cref{sec:equivalence}). \\ There exists a line of research which aims to understand liquid democracy from an epistemic viewpoint \citep{KMP21a, CaMi19a, HHJ+23a}. Here, many of the negative results stem from the fact that voting weight is concentrated on few casting voters. Since, intuitively, ranked delegations can help to distribute the voting weight more evenly, it would be interesting to study these through the epistemic lense.

\section*{Acknowledgements}
This work was supported by the \emph{Deutsche Forschungsgemeinschaft} (under grant BR 4744/2-1), the \emph{Centro de Modelamiento Matemático (CMM)} (under grant FB210005, BASAL funds for center of excellence from ANID-Chile), \emph{ANID-Chile} (grant ACT210005), and the \emph{Dutch Research Council (NWO)} (project number 639.023.811, VICI ``Collective Information''). Moreover, this work was supported by the National Science Foundation under Grant No. DMS-1928930 and by the Alfred P. Sloan Foundation under grant G-2021-16778, while Ulrike Schmidt-Kraepelin was in residence at the Simons Laufer Mathematical Sciences Institute (formerly MSRI) in Berkeley, California, during the Fall 2023 semester.

We would like to thank Markus Brill for suggesting the setting to us as well as insightful discussions. Moreover, we thank Jannik Matuschke for helpful discussions on min-cost branchings. Also, we thank our colleagues from Universidad de Chile, Martin Lackner, and Théo Delemazure for their valuable feedback.

\bibliography{abb,algo,small}
\bibliographystyle{abbrvnat}

\newpage
\appendix

\part*{Appendix}

\section{Missing Proof of \Cref{sec:preliminaries}}
\label{app:preliminaries}

For the proof of \Cref{lem:absorbing_self_loops}, we first explain how to compute the absorbing probabilities of an absorbing Markov chain $(G,P)$.
W.l.o.g. we assume that the states $V(G)$ are ordered such that the non-absorbing states $N$ come first and the absorbing states $S$ last.
We can then write the transition matrix as 
\[ 
    P = 
    \begin{bmatrix}
      D & C \\
      0 & I_{|S|} \\
    \end{bmatrix} \quad,
\] 
where $D$ is the $|N| \times |N|$ transition matrix from non-absorbing states to non-absorbing states and $C$ is the $|N| \times |S|$ transition matrix from non-absorbing states to absorbing states. $I_{|S|}$ denotes the $|S|\times |S|$ identity matrix.
The absorbing probability of an absorbing state $s \in S$, when starting a random walk in a state $v \in N$ is then given as the entry in the row corresponding to $v$ and the column corresponding to $s$ in the $|N| \times |S|$ matrix $(I_{|N|} - D)^{-1} C$ \citep{GrSn97b}.

\absorbingSelfLoops*

\begin{proof}
  Let $(D, C)$ and $(D', C')$ be the transition matrices of the absorbing Markov chain  before and after adding the self-loop. Let $\vec{d}_v, \vec{d}_v', \vec{c}_v, \vec{c}_v'$ be the rows of $D, D', C, C'$, corresponding to state $v$ respectively. Then 
  \begin{align*}
    \vec{d}_v' &= (1-p) \vec{d}_v + p \vec{e}_v^\T \quad,\\
    \vec{c}_v' &= (1-p) \vec{c}_v  
  \end{align*}
  and $\vec{d}_u' = \vec{d}_u$ and $\vec{c}_u' = \vec{c}_u$ for all $u \neq v$.

  We want to show that $(I_{|N|}-D)^{-1} C = (I_{|N|}-D')^{-1} C'$. Let $Z = (I_{|N|}-D)^{-1} C$. Then $Z = (I_{|N|}-D')^{-1} C'$ if and only if $Z = D'Z + C'$. Notice, that only the row corresponding to $v$ in $D'$ and $C'$ differ from $D$ and $C$ and therefore for all $u \neq v$ $$\vec{z}_u = \vec{d}_u Z + \vec{c}_u = \vec{d}_u' Z + \vec{c}_u'  \quad,$$
  where $\vec{z}_u$ is the row of $Z$ corresponding to $u$. The only thing left to show is $\vec{z}_v = \vec{d}_v' Z + \vec{c}_v'$. We have
  \begin{alignat*}{4}
    \vec{d}_v' Z + \vec{c}_v' &= ((1-p) \vec{d}_v  + p \vec{e}_v^\T) Z + (1-p) \vec{c}_v   \\
    &= (1-p) \vec{d}_v Z + p \vec{e}_v^\T Z + (1-p) \vec{c}_v  \\
    &= (1-p) (\vec{d}_v Z + \vec{c}_v) + p \vec{e}_v^\T Z \\
    &= (1-p) \vec{z}_v + p \vec{z}_v  &\qquad& (\text{since } DZ+C=Z) \\
    &= \vec{z}_v \quad,
  \end{alignat*}
  which concludes the proof.
\end{proof}

\section{Missing Proofs of \Cref{sec:borda}}
\label{app:proof-dual}

\dualcharacterization*

\begin{proof_directly}
Let $(G,c)$ be a delegation graph and let $(\F,E_y,y)$ be the output of \Cref{alg:cert}.

We start by proving statement (ii). The sets in $\F$ correspond exactly to those sets with positive $y$-value. Assume for contradiction that there exist two sets $X,Y \in \F$ with $X \cap Y \neq \emptyset$ and none of the subsets is a subset of the other. Assume without loss of generality that $X$ was selected before $Y$ by the algorithm and let $y_1$ and $y_2$ be the status of the function $y$ in each of the two situation. Then, by construction of the algorithm it holds that $G_1 = (N \cup S,E_{y_1})$ is a subgraph of $G_2 = (N \cup S,E_{y_2})$. This is because once an edge is added to the set of tight edges (denoted by $E_{y}$) it remains in this set. Since $Y$ is a strongly connected component in $G_2$ without outgoing edge, it holds that for every $z \in X \setminus Y$ and $z' \in X \cap Y$, the node $z'$ does not reach $z$ in $G_2$. However, this is a contradiction to the fact that $X$ is a strongly connected component in the graph $G_1$, which concludes the proof of statement (ii).

We now turn to prove statement (i) and already assume that $\F$ is laminar. We fix an order of the selected strongly connected components in line 3 of the algorithm. Then, suppose that for some other choices in line 3, the algorithm returns some other output $(\hat{\F},E_{\hat{y}},\hat{y})$. Note that $\hat{\F} \neq \F$ or $E_{\hat{y}} \neq E_y$ implies that $\hat{y} \neq y$. Thus, it suffices to assume for contradiction that $\hat{y} \neq y$. Then there must be a smallest set $X$, that has $y(X) \neq \hat{y}(X)$ (without loss of generality we assume $y(X) > \hat{y}(X)$). Let $\mathcal{X} = 2^X \setminus \{X\}$ be the set of all strict subsets of $X$. Since we defined $X$ to be of minimal cardinality, we have $y[\mathcal{X}] = \hat{y}[\mathcal{X}]$, where $y[\mathcal{X}]$, and $\hat{y}[\mathcal{X}]$ denote the restriction of $y$ and $\hat{y}$ to $\mathcal{X}$, respectively. Because $y(X) > 0$, all children of $X$ are strongly connected by tight edges with respect to $y[\mathcal{X}]$ and have no tight edges pointing outside of $X$. Now, consider the iteration of the alternative run of the \Cref{alg:dual}, in which the algorithm added the last set in $\mathcal{X} \cup \{X\}$. Since $\hat{y}(X)<y(X)$, for every further iteration of the algorithm, a chosen set $X' \neq X$ cannot contain any node in $X$ (because otherwise $X'$ cannot form a strongly connected component without outgoing edge). However, since the nodes in $X$ cannot reach a sink via tight edges, this is a contradiction to the termination of the algorithm.

We now prove statement (iii). The plan of attack is the following: First we define a linear program that captures the min-cost branchings in a delegation graph. Second, we dualize the linear program and show that $y$ (more precisely a minor variant of $y$) is an optimal solution to the dual LP, and third, utilize complementary slackness to prove the claim. For a given delegation graph $(G,c)$ with $V(G)=N \cup S$ we define the following linear program, also denoted by (LP): \begin{align*}
    \min \sum_{e \in E} c(e)  x_e & \\ 
     \sum_{e \in \delta^{+}(X)} x_e & \geq 1 && \forall \; X \subseteq N \\ 
     x_e & \geq 0 && \forall e \in E
\end{align*}

We claim that every branching $B$ in $G$ induces a feasible solution to (LP). More precisely, given a branching $B$, let \[x_e = \begin{cases} 1 &\text{ if } e \in B \\ 0 &\text{ if } e \notin B. \end{cases}\]
The last constraint is trivially satisfied. Now, assume for contradiction that there exists $X \subseteq N$ such that the corresponding constraint in (LP) is violated. In this case the nodes in $X$ have no path towards some sink node in $B$, a contradiction to the fact that $B$ is a (maximum cardinality) branching. In particular, this implies that the objective value of (LP) is at most the minimum cost of any branching in $G$ (in fact the two values are equal, but we do not need to prove this at this point). We continue by deriving the dual of (LP), to which we refer to as (DLP): 
\begin{align*}
    \max \sum_{X \subseteq N} y_X & \\ 
     \sum_{X\subseteq N \mid e \in \delta^{+}(X)} y_X & \leq c(e) && \forall \; e \in E \\ 
     y_X & \geq 0 && \forall X \subseteq N
\end{align*}
Now, let $y$ be the function returned by \Cref{alg:cert}. We define $\hat{y}$, which is intuitively $y$ restricted to all subsets on $N$, more precisely, $\hat{y}(X) = y(X)$ for all $X \subseteq N$. We claim that $\hat{y}$ is a feasible solution to (DLP). This can be easily shown by induction. More precisely, we fix any $e \in E$ and show that the corresponding constraint in (DLP) is satisfied throughout the execution of the algorithm. At the beginning of the algorithm $y$ (and hence $\hat{y}$) is clearly feasible for (DLP). Now, consider any step in the algorithm and let $X$ be the selected strongly connected component. If $e \in \delta^{+}(X)$, then we know that the constraint corresponding to $e$ is not tight (since $X$ has no tight edge in its outgoing cut). Moreover, $y$ is only increased up to the point that some edge in $\delta^+(X)$ becomes tight (and not higher than that). Hence, after this round, the constraint for $e$ is still satisfied. If, on the other hand, $e \notin \delta^+(X)$, then the left-hand-side of $e$'s constraint remains equal when $y(X)$ is increased. Hence, the constraint of $e$ is still satisfied. 

Next, we claim that there exists a branching $B$ in $G$, such that for the resulting primal solution $x$, it holds that $\sum_{e \in E}c(e)x_e =\sum_{X \subseteq N} \hat{y}(X)$. The branching $B$ will be constructed in a top-down fashion by moving along the laminar hierarchy of $\F$. To this end let $G_X$ be the contracted graph as defined in \Cref{alg:dual}. We start by setting $X = N \cup S$. Since every node in $N$ can reach some sink via tight edges, we also know that every node in $G_X$ can reach some sink. Hence, a branching in $G_X$ has exactly one edge per node in $V_X$ that is not a sink. Let's pick such a branching $B_X$. We know that for every edge in $B_X = (Y,Z)$ there exists some edge in the original graph $G$ that is also tight, i.e., $u \in Y$ and $v \in Z$ such that $(u,v) \in E_y$. For every edge in $B_X$ pick an arbitrary such edge and add it to $B$. Now, pick an arbitrary node $Y \in V(G_X)$. By construction, we know that exactly one edge from $B$ is included in $\delta^+(Y)$, call this edge $(u,v)$. Then, within the graph $G_Y$, there exists exactly one node $Z \in V(G_Y)$, that contains $u$. We are going to search for a $Z$-tree within $G_Y$. We know that such a tree exists since $G_Y$ is strongly connected by construction. We follow the pattern from before, i.e., finding a $Z$-tree, mapping the edges back to the original graph (arbitrarily), and then continuing recursively. For proving our claim, it remains to show that $\sum_{e \in E}c(e)x_e =\sum_{X \subseteq N} \hat{y}(X)$. The crucial observation is that, by construction, every set in $\hat{\F} = \F \setminus \{\{s\} \mid s \in S\}$ is left by exactly one edge in $B$. Hence, we can partition the set $\hat{\F}$ into sets $\bigcup_{e \in B}\hat{F}_e$, where $\hat{\F_e} = \{X \in \hat{\F} \mid e \in \delta^+(X)\}$. Moreover, observe that every edge in $B$ is tight. As a result we get that \[\sum_{e \in B} c(e)x_e = \sum_{e \in B} \sum_{X \in \hat{\F}_e} \hat{y}(X) = \sum_{X \subseteq N } \hat{y}(X),\]
proving the claim. 

As a result, note that we found a primal solution $B$ (precisely, the $x$ induced by $B$), and a dual solution $\hat{y}$ having the same objective value. By weak duality, we can conclude that both solutions are in particular optimal. It only remains to apply complementary slackness to conclude the claim. 
To this end, let $B$ be a min-cost branching and $x$ be the induced primal solution. By the argument above we know that $x$ is optimal. 
Now, for any $X \subseteq N$ for which $\hat{y}(X)>0$ (hence $X \in \F$), complementary slackness prescribes that the corresponding primal constraint is tight, i.e., $\sum_{e \in \delta^+(X)}x_e = 1$. Hence, the branching corresponding to $x$ leaves the set $X$ exactly once, and part (b) of statement (iii) is satisfied. For statement (a) we apply complementary slackness in the other direction. That is, when $x_e >0$, this implies that the corresponding dual constraint is tight, implying that $e$ has to be tight with respect to $\hat{y}$ and therefore also with respect to $y$ (recall that $y$ and $\hat{y}$ only differ with respect to the sink nodes). 

We now turn to proving statement (iv). This is done almost analogously to statement (iii). Fix $X \in \F$. In the following we argue about the min-cost in-trees in $G[X]$ and how to characterize these via a linear program. To this end, we add a dummy sink node $r$ to the graph $G[X]$ and call the resulting graph $\hat{G}$. More precisely, $\hat{G} = (X \cup \{r\}, E[X] \cup \{(u,r) \mid u \in X\})$. The cost of any edge $(u,r), u \in X$ is set to $c^* := \max_{e \in E(G)} c(e) + 1$, where it is only important that this value is larger than any other cost in the graph. We define the following LP: 
\begin{align*}
    \min \sum_{e \in E(\hat{G})} c(e)  x_e & \\ 
     \sum_{e \in \delta_{\hat{G}}^{+}(Z)} x_e & \geq 1 && \forall \; Z\subseteq X \\ 
     x_e & \geq 0 && \forall e \in E(\hat{G})
\end{align*}
For every min-cost in-tree $T$ in $G[X]$ we obtain a feasible solution to (LP). To this end, let $u \in X$ be the sink node of $T$ and define $\hat{T} = T \cup \{(u,r)\}$. Then, translate $\hat{T}$ to its incidence vector $x$. Given this observation, we again derive the dual of (LP), to which we refer to as (DLP): 
\begin{align*}
    \max \sum_{Z \subseteq X} y_Z & \\ 
     \sum_{Z\subseteq X \mid e \in \delta_{\hat{G}}^{+}(Z)} y_Z & \leq c(e) && \forall \; e \in E(\hat{G}) \\ 
     y_Z & \geq 0 && \forall Z \subseteq X
\end{align*}
Now, let $y$ be the output of \Cref{alg:cert} for the original graph $G$. We derive $\hat{y}: 2^{X} \rightarrow \mathbb{R}$ as follows: \[\hat{y}(Z) = \begin{cases} y(Z) & \text{ if } Z \subset X \\ c^* - \max_{u \in X} \sum_{Z \subset X \mid e \in \delta^{+}_{\hat{G}}(Z)} y(Z) & \text{ if } Z = X\end{cases}\]

First, analogously to (iii), it can be verified that $\hat{y}$ is a feasible solution to (DLP). Moreover, again analogously to (iii), there exists some min-cost $r$-tree in $\hat{G}$ and a corresponding primal solution $x$, such that $\sum_{e \in E(\hat{G})} c(e) x_e = \sum_{Z \subseteq X} \hat{y}_X$. (This tree is derived by first chosing a tight edge towards the dummy root node $r$ and then again recurse over the laminar family $\F$ restricted to $X$.) This implies by weak duality that $\hat{y}$ is an optimal solution to (DLP) and any min-cost $r$-tree in $\hat{G}$ is an optimal solution to (LP). As a result, we can again apply complementary slackness in both directions: Let $T$ be a min-cost in-tree in $G[X]$ with sink node $u \in X$. Then let $\hat{T} = T \cup \{(u,r)\}$  be the corresponding min-cost $r$-tree in $\hat{G}$ and $x$ be the corresponding incidence vector. Then, complementary slackness implies that for any $e \in E[X]$ for which $x_e >0$ (and hence $e \in T$), it holds that the corresponding constraint in (DLP) is tight with respect to $\hat{y}$ (and also $y$). This implies that $e \in E_{y}$. On the other hand, for any $Z \subset X$, if $\hat{y}_Z>0$, and hence $X \in \F$, complementary slackness prescribes that the corresponding primal constraint is tight, and hence $|T \cap \delta^+_{G[X]}(Z)| = 1$, concluding the proof. 
\end{proof_directly}

\section{Further Results of \Cref{sec:axioms}}\label{app:axioms}

We introduce an additional axiom, which was in its essence first introduced by \cite{BeSw15a} and first given the name \textit{guru-participation} in \cite{KoRi20a}. The idea is that a representative (the \textit{guru}) of a voter, should not be worse off if said voter abstains from the vote. \cite{BDG+22a} define this property for non-fractional ranked delegations by requiring that any casting voter that was not a representative of the newly abstaining voter should not loose voting weight. This definition translates well into the setting of fractional delegations where we can have multiple representatives per voter. For simplicity, we made a slight modification to the definition\footnote{More specifically, \cite{BDG+22a} use the notion of \emph{relative} voting weight between the casting voters in the definition of the axiom, which follows from our version of the axiom using absolute voting weight.}, resulting in a slightly stronger axiom.

Previously, we stated the general assumption that every delegating voter in a delegation graph $(G,c)$ has a path to some casting voter in $G$. In this section we modify given delegation graphs by removing nodes or edges, which may result in an invalid delegation graph not satisfying this assumption. To prevent this, we implicitly assume that after modifying a delegation graph, all nodes in $N$ not connected to any sink in $S$ (we call them \emph{isolated}) are removed from the graph.

\textbf{Guru Participation:}
  A delegation rule satisfies \emph{guru-participation} if the following holds for every instance $(G, c)$: Let $(\hat{G}, c)$ be the instance derived from $(G, c)$ by removing a node $v \in N$ (and all newly isolated nodes), let $S_v = \{ s \in S \mid A_{v,s} > 0 \}$ be the set of representatives of $v$ and let $A$ and $\hat{A}$ be the assignments returned by the delegation rule for $(G,c)$ and $(\hat{G}, c)$, respectively. Then
  \begin{align*}
    \pi_s(\hat{A}) &\ge \pi_s(A) \quad \forall s \in S \backslash S_v \quad.
  \end{align*}
  In particular, this implies 
  \[
     \sum_{s \in  S_v} \pi_s(\hat{A}) + 1 \le \sum_{s \in  S_v} \pi_s(A) \quad.
  \]

In order to prove that the \randwalk satisfies guru-participation we first show the following lemma, saying that the voting weight of no casting voter decreases, when the in-edges of another casting voter are removed from the graph.

\begin{lemma} \label{lem:random_walk_remove_in_edges}
    For the \randwalk, removing the incoming edges of some casting voter $s \in S$ (and all newly isolated voters) does not decrease the absolute voting weight of any casting voter $s' \in S \setminus \{s\}$.
\end{lemma}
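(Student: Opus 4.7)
The plan is to compare the \randwalk outcomes on $(G,c)$ and $(\hat G,c)$ directly, where $\hat G$ denotes $G$ with the in-edges of $s$ removed and all newly isolated voters pruned. For a fixed $\eps \in (0,1]$, denote the resulting parametric transition matrices by $P^{(\eps)}$ and $\hat P^{(\eps)}$. My first step is to observe a pointwise monotonicity: for any node $u$ with no edge to $s$ the entire $u$-row is unchanged, while for any node $u$ with $(u,s) \in E(G)$ the normalization $\bar\eps_u$ strictly decreases by $\eps^{c(u,s)}$, so every surviving transition probability satisfies $\hat P^{(\eps)}_{u,w} \geq P^{(\eps)}_{u,w}$.

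My second step is to exploit the walk-based expansion of the absorbing probabilities. Fix $s' \in S \setminus \{s\}$. The key observation is that for every non-pruned delegating voter $v \in \hat N$, the set $\W[v,s']$ of walks from $v$ to $s'$ in $G$ coincides with the corresponding set in $\hat G$: a walk ending at $s' \neq s$ never traverses an in-edge of $s$ (since $s$ is a sink and the walk terminates elsewhere), and every intermediate delegating voter on such a walk still reaches $s'$ via the walk's suffix, so it cannot have been pruned. Combining this identification with the transition-probability monotonicity yields
\[
  \hat A^{(\eps)}_{v,s'} \;=\; \sum_{W \in \W[v,s']} \prod_{e \in W} \hat P^{(\eps)}_e \;\geq\; \sum_{W \in \W[v,s']} \prod_{e \in W} P^{(\eps)}_e \;=\; A^{(\eps)}_{v,s'}
\]
for every $\eps \in (0,1]$ and every $v \in \hat N$. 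Taking $\eps \to 0$ preserves the inequality and gives $\hat A_{v,s'} \geq A_{v,s'}$ by definition of the \randwalk.

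To finish, I would handle the pruned voters: a short induction on the rounds of iterative pruning shows that every pruned voter $v$ has all of its paths to $S$ in $G$ ending at $s$, since a later-round isolated voter can only route through earlier-removed voters which themselves only reach $s$. Hence $A_{v,s'} = 0$ for every pruned $v$ and every $s' \neq s$, and therefore
\[
  \pi_{s'}(A) \;=\; 1 + \sum_{v \in N} A_{v,s'} \;=\; 1 + \sum_{v \in \hat N} A_{v,s'} \;\leq\; 1 + \sum_{v \in \hat N} \hat A_{v,s'} \;=\; \pi_{s'}(\hat A),
\]
which is exactly the claimed inequality.

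The main obstacle I expect is the second step: arguing cleanly that $\W[v,s']$ is identical in $G$ and in $\hat G$ even after several rounds of pruning, and transferring the edgewise probability monotonicity into an inequality between the (infinite) sums of products along walks. Once that careful comparison is in place, the reduction to the non-pruned voters and the limit $\eps \to 0$ are routine.
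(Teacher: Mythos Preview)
Your proposal is correct and follows essentially the same approach as the paper: both argue pointwise that $\hat P^{(\eps)}_e \ge P^{(\eps)}_e$ on surviving edges, observe that for $s' \neq s$ the walk sets $\W[v,s']$ are unchanged, and combine these into $\hat A_{v,s'} \ge A_{v,s'}$ before passing to the limit. You are slightly more explicit than the paper about why pruned voters contribute nothing to $\pi_{s'}(A)$, but this is a presentational rather than a substantive difference.
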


\begin{proof_directly}
    Let $(G, c)$ be a delegation graph and $s \in S$ a sink. Let $(\hat{G}, c)$ be the delegation graph, where the in-edges of $s$ and all voters disconnected from casting voters are removed. Let $P^{(\eps)}$ and $\hat{P}^{(\eps)}$ be the transition matrices of the corresponding Markov chains $M_\eps$ and $\hat{M}_\eps$. Then, for any $\eps > 0$ and edge $e$ in $\hat{G}$ we have $P^{(\eps)}_e \le {\hat{P}^{(\eps)}_e}$. Since no edge on a path from any $v \in N$ to any $s' \in S \setminus \{s\}$ was removed, we have $\hat{\mathcal{W}}[v,s'] = \mathcal{W}[v,s']$ and $\hat{P}^{(\eps)}_e \ge P^{(\eps)}_e$ for every edge $e$ in $\hat{G}$ and $\eps > 0$. Therefore, for the absolute voting weight of any $s' \in S \setminus \{s\}$ in $\hat{G}$ we get 
    \[
        \pi_{s'}(\hat{A}) 
        = 1 + \sum_{v \in N}  \lim_{\eps \rightarrow 0} \sum_{\hat{W} \in \hat{\W}[v,s']} \prod_{e \in \hat{W}} P^{(\eps)}_e
        \ge 1 + \sum_{v \in N}  \lim_{\eps \rightarrow 0} \sum_{W \in \W[v,s']} \prod_{e \in W} P^{(\eps)}_e
        = \pi_{s'}(A) \quad,
    \]
    which concludes the proof.
\end{proof_directly}

Using Lemma \ref{lem:random_walk_remove_in_edges} and the proof of \Cref{thm:copy_robustness}, we can show that guru-participation is satisfied by the \randwalk by removing a delegating voter step by step.

\begin{theorem}
    The \randwalk satisfies guru participation.
\end{theorem}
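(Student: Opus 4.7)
The plan is to decompose the removal of a single delegating voter $v$ into three intermediate modifications of the delegation graph and track the voting weight of each sink $s \in S \setminus S_v$ across these steps. Let $G_1$ be obtained from $G$ by deleting all outgoing edges of $v$ (so $v$ becomes a casting voter), let $G_2$ be obtained from $G_1$ by additionally deleting all incoming edges of $v$ together with any newly isolated voters, and note that in $G_2$ the node $v$ is disconnected from the rest of the graph, so deleting it yields exactly $\hat{G}$. Denote by $A_1$, $A_2$, $\hat{A}$ the fractional assignments produced by the \randwalk on $G_1$, $G_2$, and $\hat{G}$, respectively.

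The first step will invoke the proof of \Cref{thm:copy_robustness} rather than its statement: that proof in fact establishes the pointwise equality $\pi_s(A) = \pi_s(A_1)$ for every $s \in S \setminus S_v$, since it argues that for any sink unreachable from $v$ in the tight-edge subgraph (which is exactly $S \setminus S_v$), the collection of walks ending at that sink is unchanged when $v$ is turned into a sink. The second step applies \Cref{lem:random_walk_remove_in_edges} to the newly created sink $v$ in $G_1$, giving $\pi_s(A_2) \geq \pi_s(A_1)$ for every $s \in S$. For the third step, since $v$ has neither incoming nor outgoing edges in $G_2$, its deletion does not alter the transition probabilities or absorbing probabilities on any of the remaining nodes, so $\pi_s(\hat{A}) = \pi_s(A_2)$ for every $s \in S$. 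Chaining the three relations yields the desired inequality $\pi_s(\hat{A}) \geq \pi_s(A)$ for all $s \in S \setminus S_v$.

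The secondary inequality $\sum_{s \in S_v} \pi_s(\hat{A}) + 1 \leq \sum_{s \in S_v} \pi_s(A)$ then follows from total weight conservation: in any delegation graph, $\sum_{s \in S} \pi_s(\cdot) = |N| + |S|$, so $\sum_{s \in S} \pi_s(\hat{A}) \leq (|N|-1) + |S| = \sum_{s \in S} \pi_s(A) - 1$ (accounting for the loss of at least $v$, and possibly further isolated voters), and subtracting the already-established lower bound on $\sum_{s \in S \setminus S_v} \pi_s(\hat{A})$ gives the claim. The main subtlety I anticipate lies in verifying that the set of voters pruned in the second step coincides with the set pruned in the definition of $\hat{G}$: both are precisely the voters whose every path to $S$ in $G$ routes through an outgoing edge of $v$. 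This should work out cleanly once one notes that making $v$ a sink in step 1 cannot strand any voter (it can only shorten paths), so the isolated voters appear only when $v$'s incoming edges are removed, which is exactly when all paths through $v$ are broken, matching what happens in $\hat{G}$.
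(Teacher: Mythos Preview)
Your proposal is correct and follows essentially the same three-step decomposition as the paper: turn $v$ into a sink by deleting its out-edges (invoking the proof of \Cref{thm:copy_robustness} for the pointwise equality on $S\setminus S_v$), then delete its in-edges (invoking \Cref{lem:random_walk_remove_in_edges}), then remove the now-isolated $v$. Your additional remarks on the secondary inequality and on why the set of newly isolated voters coincides in both constructions are sound elaborations that the paper leaves implicit.
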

\begin{proof_directly}
    Let $(G,c)$ be a delegation graph and $v \in N$ a delegating voter. We remove $v$ from $G$ in three steps. First, we remove all out-edges of $v$, making $v$ a casting voter and call the new delegation graph $(\hat{G}_1, c)$. Then we remove the in-edges of $v$ (and all newly isolated voters) and get $(\hat{G}_2, c)$. Finally, we remove $v$ itself to retrieve $(\hat{G}, c)$ as in the definition of guru-participation.
    Let $A$, $\hat{A}_1$, $\hat{A}_2$ and $\hat{A}$ be the assignments returned by the \randwalk for $(g,c)$, $(\hat{G}_1, c)$, $(\hat{G}_2, c)$ and $(\hat{G}, c)$, respectively.
    From the proof of \Cref{thm:copy_robustness} we know that for every casting voter $s \in S \setminus S_v$ the voting weight in the instances $(G,c)$ and $(\hat{G}_1, c)$ is equal, i.e., $\pi_s(\hat{A}_1) = \pi_s(A)$. From Lemma \ref{lem:random_walk_remove_in_edges} it follows that the voting weight of these voters can only increase if also the in-edges of $v$ are removed, i.e., $\pi_s(\hat{A}_2) \ge \pi_s(\hat{A}_1)$. Finally, removing the now completely isolated (now casting) voter $v$ does not change the absolute voting weight of any other voter and therefore $\pi_s(\hat{A}) \ge \pi_s(A)$.
\end{proof_directly}

\section{Relation to the Axioms of \cite{BDG+22a}} \label{app:axioms-relation}

{

First, we remark that the definition of a non-fractional delegation rule varies slightly from the definition of a delegation rule in \citet{BDG+22a}. That is, \cite{BDG+22a} define the output of a delegation rule as a mapping from each delegating voter to some \emph{path} to a casting voter. Here, on the other hand, we define the output of a non-fractional delegation rule as a (non-fractional) assignment of delegating voters to casting voters. Hence, the definition of a delegation rule by \citet{BDG+22a} is slightly more restrictive than our definition, hence, ceteris paribus, the impossibility result holds in particular for the smaller set of delegation rules. In the following, we refer to our definition as non-fractional delegation rules and to the definition of \cite{BDG+22a} as non-fractional* delegation rules. We say that a non-fractional delegation rule is \emph{consistent} to a non-fractional* delegation rule if, for any input, the assignment in the former corresponds to the induced assignment in the latter. 

\paragraph{Copy-robustness} Next, consider the copy-robustness axioms. The axiom by \citet{BDG+22a} for non-fractional* delegation rules differs to our copy-robustness axiom restricted to non-fractional delegation rules in two technicalities: First, \cite{BDG+22a} consider the relative voting weight instead of the absolute voting weight. However, since the number of voters does not change from $(G,c)$ to $(\hat{G},c)$, this does not change the axiom. The other difference is that their axiom requires that the delegating voter $v$ under consideration has a direct path (in the output of the non-fractional* delegation rule) to its assigned casting voter. Since a non-fractional delegation rule only outputs an assignment and no path, we relaxed this assumption. Hence, our copy-robustness axiom is slightly stronger than the one presented by \cite{BDG+22a}. Nevertheless, it is easy to see that also the weaker version of the axiom is necessarily violated within the proof of the impossibility theorem when we utilize the definition of a delegation rule by \cite{BDG+22a}. 

\paragraph{Confluence} Lastly, consider the confluence axiom. \cite{BDG+22a} define their confluence axiom, which we denote by \emph{confluence*} as follows: A non-fractional* delegation rule satisfies confluence* if, for every delegating voter $v \in N$ exactly one outgoing edge of $v$ appears within the union of paths returned by the delegation rule. In particular, this is equivalent to the fact that the union of the returned paths forms a branching in the delegation graph. We prove below that the two axioms are in fact equivalent (within the restricted domain of non-fractional delegation rules). 

\begin{proposition}
    A non-fractional* delegation rule satisfies confluence* if and only if there exists a consistent non-fractional delegation rule that satisfies confluence. 
\end{proposition}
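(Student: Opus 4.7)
\begin{proof_directly}[Proof plan]
I would interpret the proposition as the claim that the axioms of confluence and confluence$^*$ induce the same class of non-fractional assignments. Accordingly, I would split the argument into two construction steps.

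For the forward direction, given a non-fractional$^*$ rule $R^*$ satisfying confluence$^*$, the plan is to take the consistent non-fractional rule $R$ that outputs the induced assignment of $R^*$ and show it satisfies confluence. For every input $(G,c)$, let $P_v$ denote the path output by $R^*$ for $v \in N$; by confluence$^*$, $B := \bigcup_{v \in N} P_v$ is a branching. Define $f_v$ to be the degenerate distribution placing mass $1$ on (the walk induced by) $P_v$. Consistency with the assignment is immediate: $A_{v,s} = [s \in P_v] = \prob_{W \sim f_v}[s \in W]$. For the conditional probability condition, observe that if $\prob_{W \sim f_u}[v \in W] > 0$, then $v \in P_u$, and since $B$ is a branching and both $v$ and $s_u$ lie on $P_u$, the suffix of $P_u$ from $v$ onwards must coincide with $P_v$. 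Hence both sides of the confluence identity evaluate to $[s \in P_v]$, as required.

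For the backward direction, suppose $R$ is a non-fractional rule satisfying confluence with assignment $A$ and witnessing distributions $\{f_v\}_{v \in N}$; I would construct a consistent non-fractional$^*$ rule $R^*$ satisfying confluence$^*$. For each $v \in N$, let $s_v$ be the unique sink with $A_{v,s_v}=1$. The key structural lemma is: if $u,v \in N$ and $\prob_{W \sim f_u}[v \in W] > 0$, then $s_u = s_v$. This follows because $1 = A_{u,s_u} = \prob_{W \sim f_u}[s_u \in W] \geq \prob_{W \sim f_u}[s_u \in W \wedge v \in W] = \prob_{W \sim f_u}[v \in W] \cdot \prob_{W \sim f_v}[s_u \in W]$, combined with $\prob_{W \sim f_v}[s_u \in W] \in \{0,1\}$ because $A$ is non-fractional, forcing $A_{v, s_u} = 1$. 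Partitioning $N$ as $N_s = \{v \in N : s_v = s\}$, this lemma implies that every walk in $\mathrm{supp}(f_v)$ for $v \in N_s$ stays entirely within $N_s \cup \{s\}$, so $v$ reaches $s$ in the induced subgraph $G_s := G[N_s \cup \{s\}]$.

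To finish, for each $s \in S$ I would run a reverse-BFS from $s$ inside $G_s$ to obtain an in-tree $T_s$ rooted at $s$ and spanning $N_s \cup \{s\}$; such an in-tree exists precisely because every vertex of $N_s$ reaches $s$ in $G_s$. Since the sets $N_s$ are disjoint across $s \in S$, the union $B := \bigcup_{s \in S} T_s$ is a branching in $G$, and each $v \in N$ has a unique path in $B$ from $v$ to $s_v$. Defining $R^*$ to output these paths yields a non-fractional$^*$ rule whose union of returned paths equals $B$, hence $R^*$ satisfies confluence$^*$, and its induced assignment equals $A$, so $R^*$ is consistent with $R$. The main obstacle is the backward direction, specifically the step showing that the walks in $\mathrm{supp}(f_v)$ cannot leak out of $N_{s_v} \cup \{s_v\}$; once the non-fractionality of $A$ is exploited to force all relevant conditional probabilities into $\{0,1\}$, the in-tree construction is routine.
\end{proof_directly}
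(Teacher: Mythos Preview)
Your forward direction is identical to the paper's. Your backward direction is correct in substance but proceeds by a genuinely different, more structural route than the paper's.

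The paper builds the branching \emph{procedurally}: it maintains a set of ``inactive'' voters (initially $S$), repeatedly picks an active voter $v$, takes any walk $W \in \mathrm{supp}(f_v)$, truncates $W$ at the first inactive voter and short-cuts cycles to obtain a path $P$, adds $P$ to $B$, and deactivates all voters on $P$. Confluence guarantees that each newly deactivated voter $u$ is assigned (by $A$) to the sink at the end of its $B$-path, so consistency is maintained throughout. You instead argue \emph{globally}: the key lemma pins down the partition $\{N_s\}_{s\in S}$, you show every walk in $\mathrm{supp}(f_v)$ is confined to $G[N_{s_v}\cup\{s_v\}]$, and then a single reverse-BFS per block yields an in-tree. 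Your approach makes the structure (a disjoint union of in-trees, one per sink) explicit from the outset and avoids the bookkeeping of the active/inactive process; the paper's approach is closer in spirit to how one would actually extract a branching on the fly from the witnessing distributions and needs no auxiliary subgraph.

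One presentational remark on your key-lemma derivation: the displayed chain $1 = \prob_{W\sim f_u}[s_u\in W] \geq \prob_{W\sim f_u}[s_u\in W \wedge v\in W] = \prob_{W\sim f_u}[v\in W]\cdot \prob_{W\sim f_v}[s_u\in W]$ only gives an upper bound of $1$ on the right-hand side, which is vacuous. What you actually need (and clearly intend) is the \emph{equality} $\prob_{W\sim f_u}[s_u\in W \wedge v\in W] = \prob_{W\sim f_u}[v\in W]$, which holds because $\prob_{W\sim f_u}[s_u\in W]=1$; dividing by $\prob_{W\sim f_u}[v\in W]>0$ then forces $A_{v,s_u}=1$ directly, without appealing to non-fractionality of $A$. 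With that fix the argument is clean.
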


\begin{proof}
    We start by proving the forward direction. Consider a non-fractional* delegation rule that satisfies confluence*. We define a consistent non-fractional delegation rule by simply returning the assignment induced by the returned paths instead of the paths. For showing that the rule satisfies confluence, we define the probability distributions $f_v, v\in N$ by setting $f_v(W) = 1$ if and only if $W$ is the path returned for $v$ by the delegation rule. All other probabilities are set to zero. Then, confluence* directly implies that the distributions $f_v, v \in N$ witness confluence. 

    For the other direction, consider a non-fractional delegation rule satisfying confluence, and let $f_v, v \in N$ be the probability distributions that witness this fact. Building upon that, we define a branching in $G$ that is consistent with the outcome of the delegation rule. Interpreting this branching as the output of a non-fractional* delegation rule then proves the claim. We construct the branching as follows: We first set $B = \emptyset$. In the beginning, set all delegating voters to be ``active'', while all casting voters are ``inactive''. Now, pick some arbitrary active voter $v \in N$ and consider some arbitrary walk $W$ that obtains non-zero probability by the distribution $f_v$. Construct a path $P$ from $W$ by cutting the walk in the first appearance of some inactive voter, and then short cutting all remaining cycles (if existent). Now, add all edges in $P$ to $B$ and set all delegating voters on $P$ to be ``inactive''. Note that, by confluence, for each newly inactive voter $u \in N$ it holds that the casting voter assigned by the delegation rule corresponds to the sink node at the end of the unique maximal path in $B$ starting from $u$. We continue this process until all voters are inactive. As a result, we created a branching that is consistent with the original delegation rule, hence, there exists a consistent non-fractional* delegation rule that implicitly returns branchings, i.e., is confluent. 
\end{proof}
 
}

\end{document}